\newtheorem{lemma}{Lemma}
\newtheorem{observation}{Observation}
\newtheorem{theorem}{Theorem}
\newtheorem{corollary}{Corollary}
\DeclareMathOperator{\poly}{poly}
\begin{document}

\title{Stoquastic ground states are classical thermal distributions}
\author{Robbie King, Sergii Strelchuk}
\affiliation{DAMTP, Centre for Mathematical Sciences, University of Cambridge, Cambridge~CB30WA, UK}

\begin{abstract}
We study the structure of the ground states of local stoquastic Hamiltonians and show that under mild assumptions the following distributions can efficiently approximate one another: (a) distributions arising from ground states of stoquastic Hamiltonians, (b) distributions arising from ground states of stoquastic frustration-free Hamiltonians, (c) Gibbs distributions of local classical Hamiltonian, and (d) distributions represented by real-valued Deep Boltzmann machines. In addition, we highlight regimes where it is possible to efficiently classically sample from the above distributions.
 
\end{abstract}
\maketitle

\section{Introduction}
Understanding the properties of quantum many-body systems is one of the biggest challenges in modern physics with profound implications in solid-state physics, quantum chemistry and quantum computing. The behaviour of these systems is characterized by a Hamiltonian, and the simplest question one can ask is about the properties of its ground state space. The knowledge of the latter space allows us to encode a solution to a large number of computational problems in the ground state of a quantum (time-dependent) Hamiltonian, and utilize methods of adiabatic quantum computation~\cite{farhi2000quantum}.

Efforts to characterize the ground space structure of general quantum Hamiltonians, using some of the most powerful simulation methods such as quantum Monte Carlo, faces considerable difficulties due to the so-called `sign problem' -- a major obstacle which results in prohibitively long convergence times for the probabilistic algorithms.

The study of quantum Hamiltonians which do not exhibit sign-problem motivated the introduction of stoquastic Hamiltonians~\cite{bravyi2006complexity}. They have the property that the off-diagonal entries are nonpositive in the standard basis. This ensures that the ground state of the Hamiltonian has real nonnegative amplitudes when expressed in this basis.
Stoquastic Hamiltonians describe a wide range of physical systems: from ground states of the transverse field Ising model~\cite{bravyi2017complexity} to Jaynes-Cummings model~\cite{walls2007quantum} and certain classes of superconducting qubits~\cite{kjaergaard2020superconducting}. Moreover, stoquastic ground states arise in the study of an important class of projected entangled pair states~\cite{verstraete2006criticality}.

In the complexity-theoretic context, stoquastic Hamiltonians have been widely studied. They gave rise to a complexity class StoqMA which is contained in the QMA, the quantum analogue of the complexity class NP, and it is expected that this containment is strict~\cite{kjaergaard2020superconducting}. Despite being sign-problem free, stoquastic Hamiltonians occupy a curious intermediate position between classical and quantum Hamiltonians. 
In particular, if the stoquastic adiabatic computation is universal for quantum computing this would imply the collapse of the Polynomial Hierarchy~\cite{bravyi2006merlin}. Also, it is known that stoquastic Hamiltonians encompass classical Ising model for which estimating its ground state energy is known to be NP-hard. There also exist simple Hamiltonians whose action is restricted to two-qubit interactions that can simulate any stoquastic Hamiltonian~\cite{bravyi2017complexity, cubitt2018universal}. Recent works further investigated the complexity of deciding whether a given $2$-local Hamiltonian has a sign problem~\cite{klassen2019two} and the algorithmic difficulty of curing the latter~\cite{marvian2019computational}. An important class of stoquastic Hamiltonians has the frustration-free property: each of its terms is stoquastic and acting on a constant number of qubits and the ground state of the overall Hamiltonian minimizes the energy of each of its terms. It is known that adiabatic evolution of such Hamiltonians can be efficiently simulated on a classical probabilistic computer~\cite{bravyi2010complexity}.

In our work, we find a precise mathematical formalism which characterizes the ground states of the stoquastic Hamiltonians. We show that one can succinctly express the structure of the ground state space in terms of Boltzmann machines. This neural network formalism, originally inspired by ideas from statistical mechanics, represents a class of energy-based models that found numerous uses in physics describing spin glasses, Ising model~\cite{alberici2020annealing} and multiple machine-learning applications~\cite{salakhutdinov2009deep}. In recent years, it has been extended to describe quantum systems by quantum neural network states~\cite{carleo2017solving}, leading to a flurry of results in condensed matter physics~\cite{shi2019neural}, quantum error correction~\cite{torlai2017neural, li2019improved}, quantum computing and beyond~\cite{gao2017efficient}. There exist several variations of Boltzmann machines, depending on the underlying graph structure~\cite{salakhutdinov2009deep}. The simplest such machine is called the Restricted Boltzmann Machine, whose graph is comprised of one visible and one hidden with the latter having no connections between hidden units. They have been widely used as generative models in classical machine learning~\cite{hinton2002training,salakhutdinov2008quantitative} and more recently in quantum physics~\cite{melko2019restricted, vieijra2020restricted} due to the ease of training and sampling. However, not all quantum systems admit an efficient representation as Restricted Boltzmann Machines~\cite{gao2017efficient}. This is remedied by considering a richer model, Deep Boltzmann Machines, which has two or more hidden layers. While it can represent quantum states efficiently, training and sampling become prohibitively slow in general~\cite{gao2017efficient}. We construct a new type of Boltzmann machine (we call it Hyper Boltzmann Machine (HBM)) which gives rise to probability distributions that precisely capture the correlations of the stoquastic ground state space and investigate its properties. We show that its representational power is comparable to the Deep Boltzmann Machines, but it naturally captures the properties of probability distributions that can be encoded in the ground states of stoquastic Hamiltonians. 

More precisely, under mild assumptions, the following classes of probability distributions can efficiently approximate one another with a polynomial overhead in the size of the system: 
\begin{itemize}
	\item Distributions arising from ground states of local stoquastic Hamiltonians.
	\item Distributions arising from ground states of local stoquastic frustration-free (SFF) Hamiltonians.
	\item Gibbs distributions of local classical Hamiltonians.
	\item Distributions arising from Deep Boltzmann machines.
\end{itemize}
In addition, we highlight an explicit link between the Boltzmann machine formalism and the classical Ising model. Finally, we investigate regimes when these distributions become classically efficiently simulable.

\bigskip

\section{Preliminaries}

\subsection{Hamiltonians}

A $k$-local Hamiltonian $H$ on $n$ qubits takes the form $H = \sum_a H_a$, a sum over $O(\poly(n))$ terms, where each term $H_a$ acts non-trivially on at most $k$ qubits. A $k$-local Hamiltonian $H = \sum_a H_a$ is stoquastic if each $H_a$ has real non-positive off-diagonal entries in the computational basis $\langle x|H_a|y \rangle \leq 0 \ \forall \ x \neq y$. The ground state $|\psi \rangle$ of a stoquastic Hamiltonians can be taken to have real positive amplitudes in the computational basis $\langle x|\psi \rangle \geq 0 \ \forall \ x$. A $k$-local Hamiltonian $H = \sum_a H_a$ is frustration free if the ground state $|\psi \rangle$ of $H$ is simultaneously a ground state of each individual term $H_a$. For each classical (i.e. diagonal) Hamiltonian we associate the Gibbs distribution (at temperature 1) $p(x) = \frac{1}{Z} e^{-H(x)}$, where $Z$ is a normalizing constant (the partition function). The corresponding coherent Gibbs state is the quantum state $|\psi \rangle = \frac{1}{\sqrt{Z}} \sum_x e^{- H(x)/2} |x\rangle$.

\subsection{Boltzmann Machines}\label{hbms}
To present our results, we first introduce a formalism which we call a Hyper Boltzmann Machine (HBM). Let $G = (V,E)$ be a hypergraph with nodes $V$ and hyperedges $E$. A hyperedge is a set of nodes. We require each hyperedges $e \in E$ to contain at most $|e| \leq k$ nodes, and we will then describe the HBM as $k$-local. Each node is labelled either visible or hidden, so that $V = V_{\text{visible}} \cup V_{\text{hidden}}$. Say $|V_{\text{visible}}| = n$ and $|V_{\text{hidden}}| = m$. Each node (visible and hidden) carries a classical binary degree of freedom ie a bit. As a convention we label the visible node variables $x \in \{0,1\}^n$ and the hidden node variables $h \in \{0,1\}^m$. For each edge $e \in E$, we have a local energy term $F_e : e \rightarrow \mathbb{R}$. The total energy of the HBM is $F = \sum_e F_e$. We say the output of the HBM is the function:
$$ f(x) = \sum_h \exp(-F(x,h)) $$
where we sum over all possible values of the hidden variables $h \in \{0,1\}^m$. The HBM will act as an important intermediate between stoquastic ground states and the various classical thermal distributions.

Let $|\psi \rangle$ be a state on $n$ qubits. $x \in \{0,1\}^n$ now indexes the standard basis. We say a HBM represents the state $|\psi \rangle$ to precision $\epsilon$ in distribution if $\left| \left| \frac{f(x)}{\sum_y f(y)} - |\langle x|\psi \rangle |^2 \right| \right|_{\text{TV}} \leq \epsilon$, where $|| \cdot ||_{\text{TV}}$ is the total variation distance. We say a HBM represents the state $|\psi \rangle$ to precision $\epsilon$ in wavefunction if $\left| \left| \frac{f(x)}{\sqrt{\sum_y f(y)^2}} - \langle x|\psi \rangle  \right| \right|_2 \leq \epsilon$, where $|| \cdot ||_2$ is the 2-norm induced by the Hilbert space inner product. Note that for a state $|\psi \rangle$ to be represented by a HBM in wavefunction, it must have real positive amplitudes in the computational basis.

A Boltzmann Machine is a special case of the HBM, where the only allowed local energy terms are $F_{\{y\}} = - a_y y$ and $F_{\{y_1,y_2\}} = - W_{y_1,y_2} y_1 y_2$. In particular, we only allow hyperedges of size at most 2, so the hypergraph becomes a regular graph. A Deep Boltzmann Machine (DBM) is a further restriction, where we require the graph to be split into three layers: visible, middle and deep. The visible layer consists precisely of all the visible nodes, and we only allow edges between consecutive layers. Note that we do not gain any generalisation by allowing further deep layers: a $N$-layer DBM can be transformed into a 3-layer DBM by folding the even-numbered hidden layers into the middle layer, and the odd-numbered hidden layers into the deep layer. If the network has no nodes in the deep layer, we call it a Restricted Boltzmann Machine (RBM). Figure~\ref{boltzmannmachines} illustrates these concepts.

\begin{figure}[h]
     \centering
     \begin{subfigure}[H]{0.2\textwidth}
         \centering
         \includegraphics[width=\textwidth]{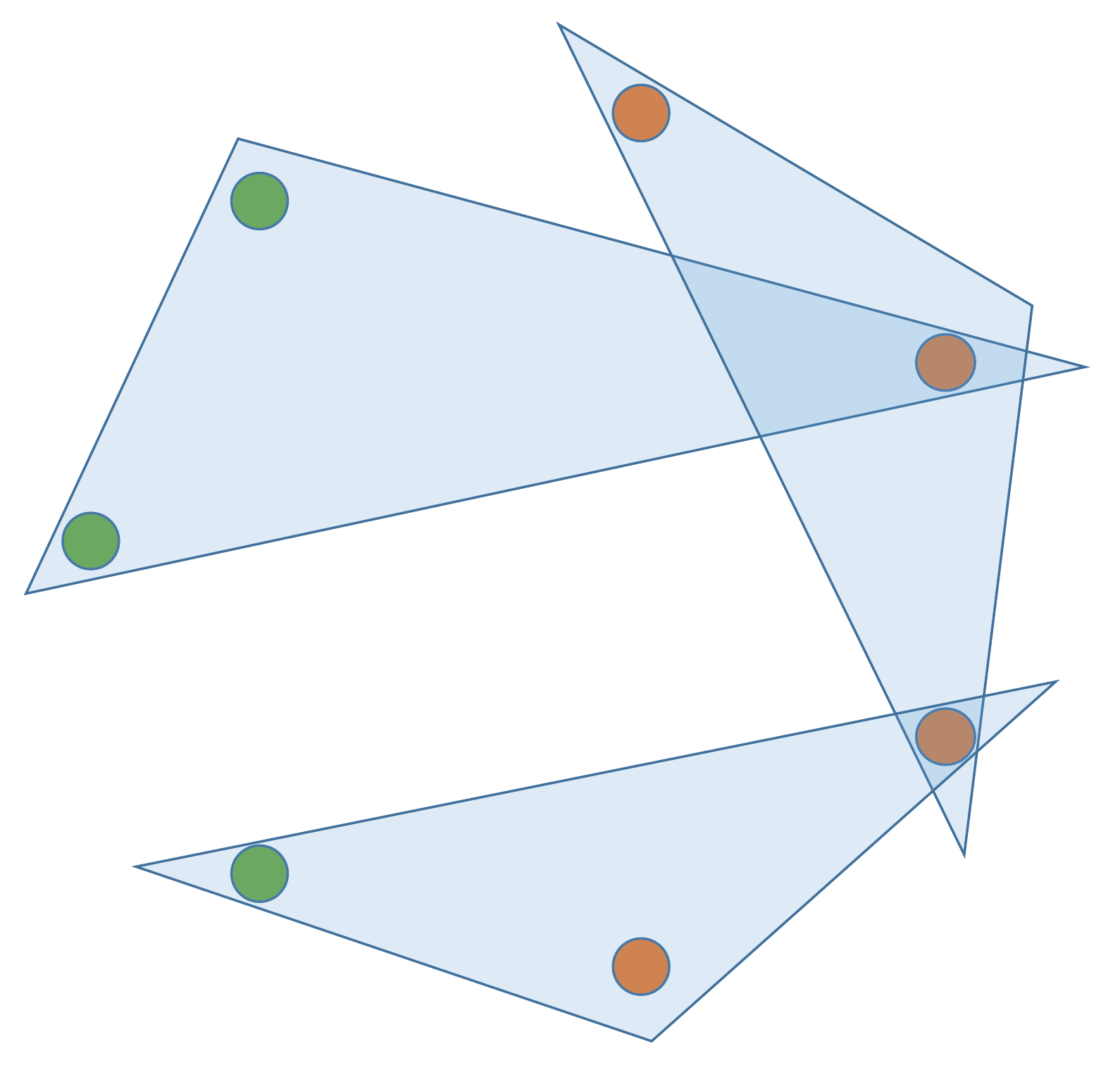}
         \caption{HBM}
     \end{subfigure}
     \begin{subfigure}[H]{0.2\textwidth}
         \centering
         \includegraphics[width=\textwidth]{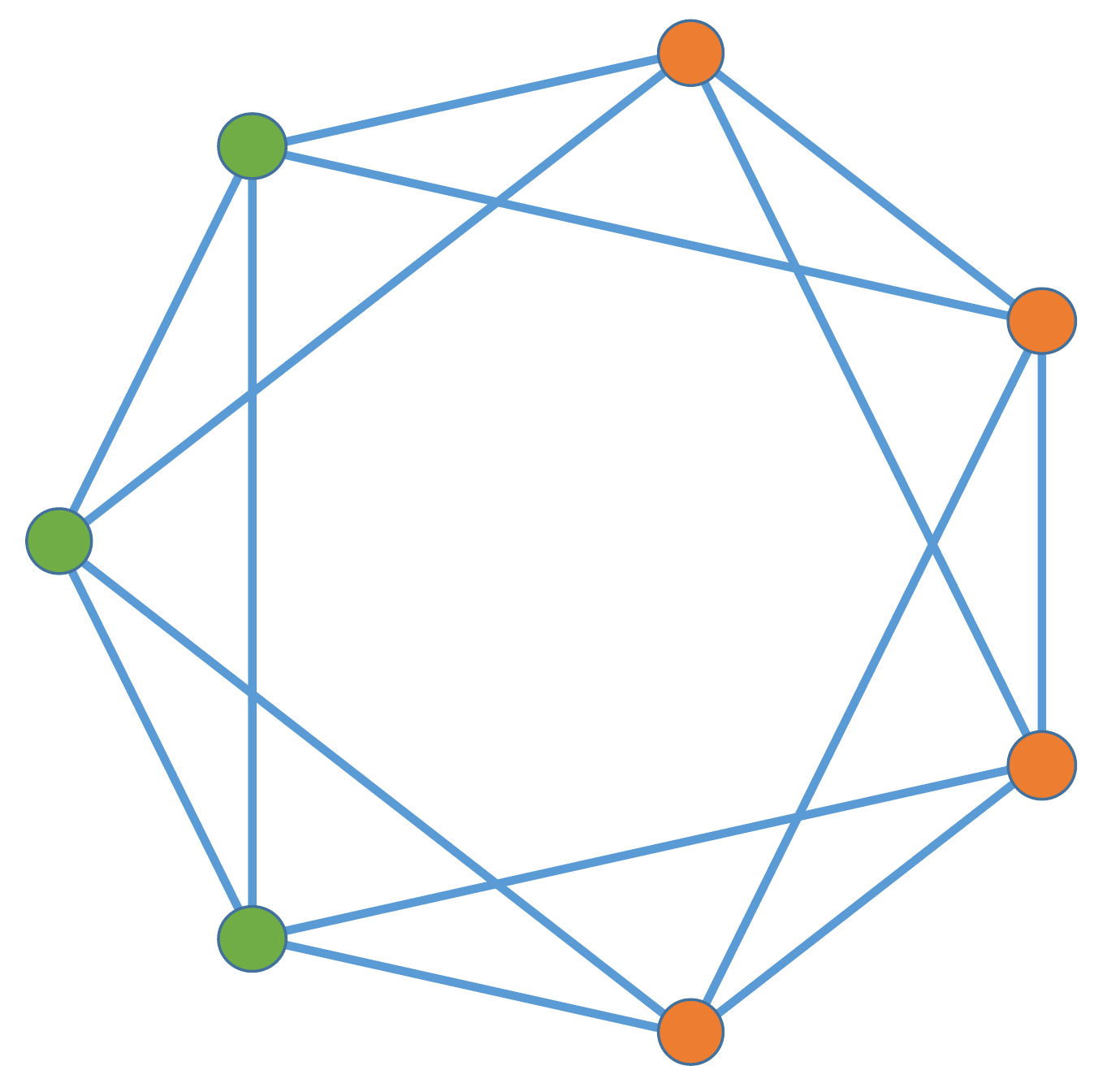}
         \caption{Boltzmann machine}
     \end{subfigure}
     
     \begin{subfigure}[H]{0.2\textwidth}
         \centering
         \includegraphics[width=\textwidth]{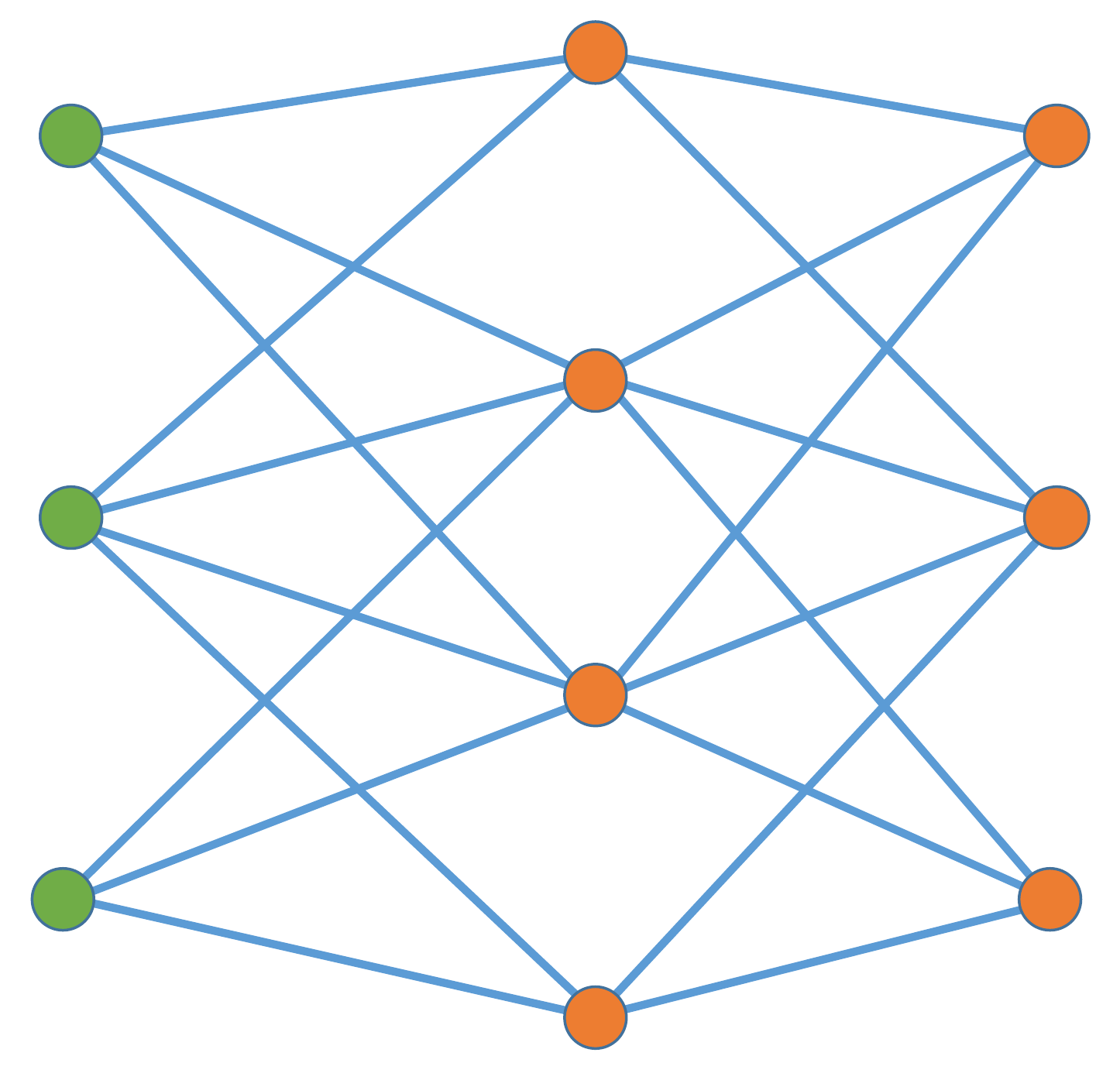}
         \caption{DBM}
     \end{subfigure}
     \begin{subfigure}[H]{0.15\textwidth}
         \centering
         \includegraphics[width=\textwidth]{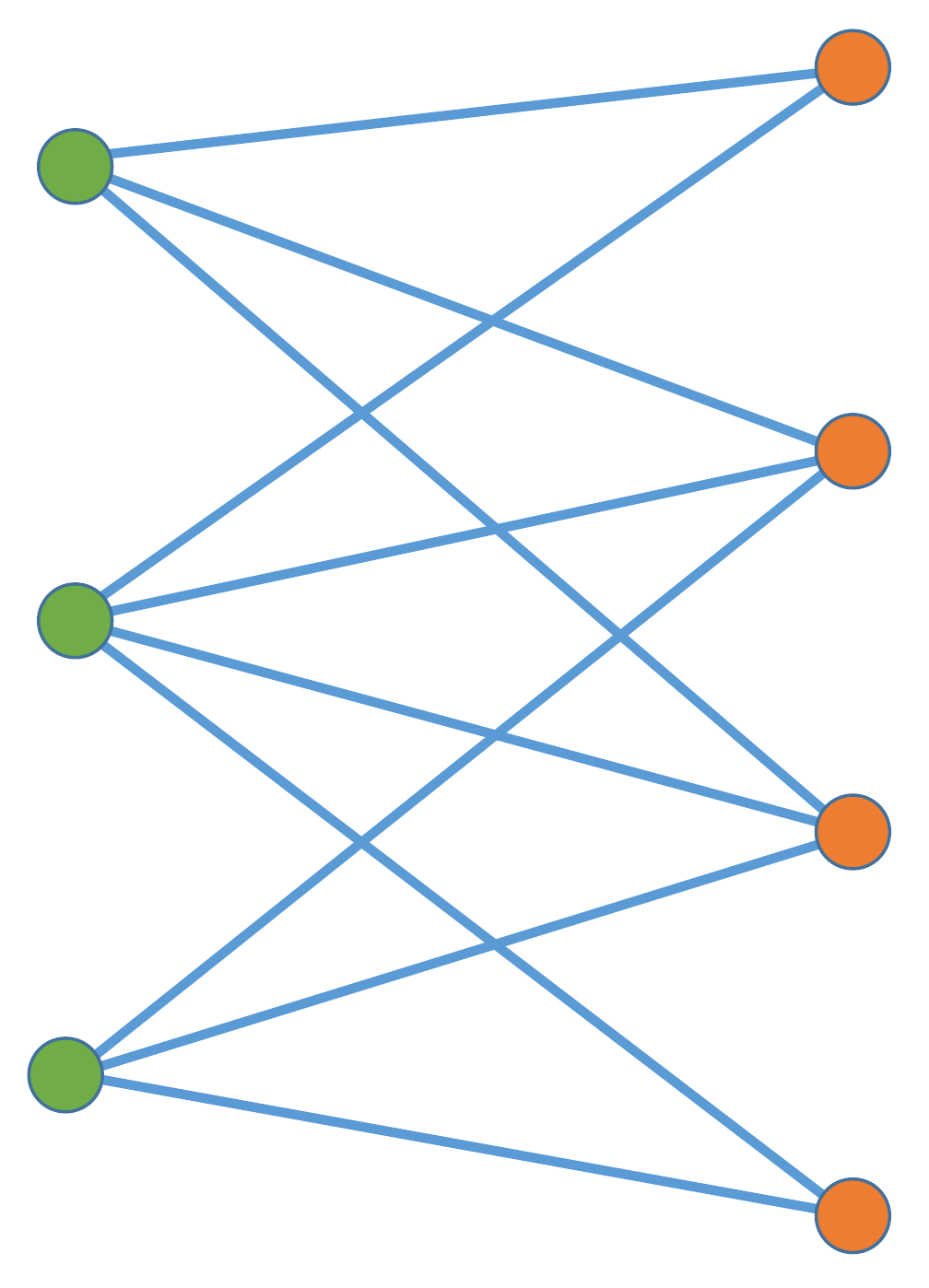}
         \caption{RBM}
     \end{subfigure}
     \caption{Various types of Boltzmann machine. The green nodes are the visible nodes, and the red nodes are the hidden nodes. The shaded patches in (a) are hyperedges, and the lines in (b),(c),(d) are edges.}
          \label{boltzmannmachines}
\end{figure}

We can make some initial observations:

\begin{observation}\label{observation}
\begin{enumerate}
	\item Consider a $k$-local HBM with $n$ visible nodes, $m$ hidden nodes, $T$ hyperedges, with each node contained in at most $k'$ hyperedges. The distribution given by this HBM is a marginal distribution of the Gibbs distribution of a $k$-local classical Hamiltonian on $n+m$ bits, with $T$ terms, where each bit is acted on by at most $k'$ terms.
	\item Consider the Gibbs distribution of a $k$-local classical Hamiltonian on $n$ bits, with $T$ terms, where each bit is acted on by at most $k'$ terms. This state is represented exactly in distribution by a $k$-local HBM with no hidden nodes, $T$ hyperedges, with each node contained in at most $k'$ hyperedges.
\end{enumerate}
\end{observation}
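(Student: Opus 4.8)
The plan is to observe that both statements follow by directly identifying the HBM energy function with a classical (diagonal) Hamiltonian and then checking that normalisation and marginalisation are consistent. No estimation or approximation is involved anywhere, so I expect both claims to hold exactly rather than merely to precision $\epsilon$.

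For part (1), I would define a classical Hamiltonian $H$ on the full set of $n+m$ bits by setting $H = \sum_e F_e$, where each local energy $F_e$ is promoted to a diagonal operator acting on the bits in the hyperedge $e$. Since every hyperedge has size at most $k$, this $H$ is $k$-local; it has exactly $T$ terms, one per hyperedge, and each of the $n+m$ bits is acted on by at most $k'$ terms precisely because each node lies in at most $k'$ hyperedges. The associated Gibbs distribution is $p(x,h) = e^{-H(x,h)}/Z$ with $Z = \sum_{x,h} e^{-F(x,h)}$. The key step is then to compute the marginal over the hidden bits:
\begin{equation}
\sum_h p(x,h) = \frac{\sum_h e^{-F(x,h)}}{\sum_{x',h'} e^{-F(x',h')}} = \frac{f(x)}{\sum_{x'} f(x')},
\end{equation}
where the last equality uses $\sum_{x',h'} e^{-F(x',h')} = \sum_{x'} f(x')$ by the definition of $f$. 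The right-hand side is exactly the HBM distribution, which establishes part (1).

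For part (2), I would run the same correspondence in reverse, now building an HBM with no hidden nodes, so $m=0$. Writing the classical Hamiltonian as $H = \sum_a H_a$, I would create one hyperedge $e_a$ for each term $H_a$, set the local energy $F_{e_a}$ equal to $H_a$ on the bits it acts on, and keep all $n$ bits visible. Because $m=0$ the hidden sum is trivial, so $f(x) = e^{-F(x)} = e^{-H(x)}$, and the normalised HBM output $f(x)/\sum_y f(y) = e^{-H(x)}/Z$ is precisely the Gibbs distribution. The locality bookkeeping transfers verbatim: the HBM is $k$-local, has $T$ hyperedges, and each node sits in at most $k'$ hyperedges.

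Both directions are essentially definition-chasing, so I do not anticipate a genuine obstacle. The only points requiring care are the bookkeeping of the locality parameters $(k, T, k')$ and confirming that the partition function of the joint distribution in part (1) equals $\sum_x f(x)$, so that marginalising over the hidden bits reproduces the normalised HBM output with no leftover factors.
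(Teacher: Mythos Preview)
Your proposal is correct and follows essentially the same approach as the paper: identify the HBM nodes with the bits of a classical system, match each hyperedge energy $F_e$ with a local Hamiltonian term (and vice versa), and marginalise over the hidden bits to complete the equivalence. Your write-up is simply a more explicit version of the paper's terse argument, including the locality bookkeeping and the partition-function check.
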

\begin{proof}
Identify the nodes of the HBM with the binary degrees of freedom (the bits) of the classical system. Then for each hyperedge $e$ with energy $F_e$, create a local term with the same energy on the corresponding bits for the classical Hamiltonian, and vice versa. Marginalising over the bits corresponding to hidden nodes then completes the equivalence.
\end{proof}

Throughout the paper, $|+\rangle$ will refer to the uniform superposition state, usually denoted $|+\rangle ^{\otimes n}$. The intuition behind the use of the $|+\rangle$ state is the following: (a) stoquastic Hamiltonian ground states have without loss of generality real positive amplitudes, thus they all have an overlap with $|+\rangle$; (b) it corresponds to the uniform distribution, which is represented by the empty HBM, and is the thermal distribution of a trivial classical Hamiltonian.

\bigskip

\section{Summary of main results}

We present our results in a series of theorems. First, we consider the task of simulating ground states of general local stoquastic Hamiltonians. Theorem \ref{stoq_Gibbs} shows the connection between their ground states and classical thermal distributions. Given a local stoquastic Hamiltonian $H$ with $1/\poly(n)$ energy gap and $\poly(n)$ operator norm (where $n$ denotes the number of qubits), we can find a local classical Hamiltonian $H_c$ such that the distribution of the ground state of $H$ is a known marginal distribution of the Gibbs distribution of $H_c$. For our proof to work, we introduce the technical condition that the overlap of the ground state and the $|+\rangle$ state is also polynomial in $n$.

We can compare this condition to that of Bravyi's `guiding state'~\cite{bravyi2014monte}. In one sense, it is a stronger condition since the state $|+\rangle$ is fixed -- we have no freedom to change it depending on the specific Hamiltonian in question. On the other hand, requiring a polynomial overlap is much weaker than the guiding state condition, which requires a componentwise polynomial relationship.

\begin{theorem}\label{stoq_Gibbs}
Let $H$ be a $k$-local stoquastic Hamiltonian on $n$ qubits, with $L = \poly(n)$ terms, energy gap $\Delta = 1/\poly(n)$, with each term having operator norm bounded by $J = \poly(n)$. Let $\Pi$ project onto the ground subspace of $H$, and assume $\eta = \sqrt{\langle +|\Pi |+ \rangle} \geq 1/\poly(n)$. The distribution of the ground state $|\psi_0 \rangle = \frac{1}{\eta} \Pi |+\rangle$ can be approximated to precision $\epsilon$ (in total variation) by a marginal distribution of the Gibbs distribution of a $2k$-local classical Hamiltonian on $O(k \eta^{-1/2} L^{5/2} J^{3/2} \Delta^{-3/2} \epsilon^{-1/2} (\log(\eta^{-1} \epsilon^{-1}))^{3/2})$ bits, with $O(\eta^{-1/2} L^{5/2} J^{3/2} \Delta^{-3/2} \epsilon^{-1/2} (\log(\eta^{-1} \epsilon^{-1}))^{3/2})$ terms, and where each bit is acted on by at most 2 terms.
\end{theorem}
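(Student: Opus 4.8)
The plan is to realise the ground-state distribution as the output of a $2k$-local HBM built from a Trotterised imaginary-time evolution, and then invoke Observation~\ref{observation} to convert that HBM into a marginal of a classical Gibbs distribution. First I would replace the projector $\Pi$ by imaginary-time evolution. After shifting so that the ground energy is $0$, stoquasticity makes $e^{-\beta H}$ entrywise nonnegative, and the spectral gap $\Delta$ together with the overlap bound $\eta$ gives
\[
\left\| \frac{e^{-\beta H}|+\rangle}{\| e^{-\beta H}|+\rangle \|} - |\psi_0\rangle \right\|_2 \leq \frac{e^{-\beta \Delta}}{\eta},
\]
so choosing $\beta = O(\Delta^{-1}\log(\eta^{-1}\epsilon^{-1}))$ brings the normalised imaginary-time state to within the desired distance of $|\psi_0\rangle$. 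Since a total-variation distance between squared amplitudes is controlled by the $2$-norm distance between the wavefunctions (via Cauchy--Schwarz, $\sum_x ||a_x|^2-|b_x|^2| \le \|a-b\|_2\,\|a+b\|_2$), it suffices to approximate the wavefunction $\langle x|\psi_0\rangle$ in $2$-norm.

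Next I would Trotterise: write $e^{-\beta H} \approx \bigl(\prod_a e^{-\delta H_a}\bigr)^{r}$ with step size $\delta = \beta/r$, so that the evolution becomes a product of $rL$ local, entrywise-nonnegative factors $e^{-\delta H_a}$, each acting on $k$ qubits. Expanding the matrix product $\langle x|\bigl(\prod_a e^{-\delta H_a}\bigr)^{r}|+\rangle$ over the intermediate configurations turns it into a sum of products of nonnegative matrix elements $\langle z|e^{-\delta H_a}|z'\rangle$. Writing each such element as a Boltzmann weight $\exp(-G_a(z,z'))$ defines a local energy term on the $2k$ bits carried by the input and output slices of that factor, and the intermediate configurations become the hidden variables $h$. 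This yields a $2k$-local HBM whose output $f(x)$ equals the approximated wavefunction; squaring it by using two independent copies of the chain that share only the visible endpoint $x$ produces an HBM for $(\langle x|\Pi|+\rangle)^2 \propto |\langle x|\psi_0\rangle|^2$. In this chain geometry every intermediate bit is shared by exactly the two factors adjacent to it in time, which is precisely the ``acted on by at most $2$ terms'' condition; each Trotter substep introduces $k$ fresh bits but only one energy term, accounting for the extra factor of $k$ between the bit count and the term count. Applying Observation~\ref{observation} then reexpresses this HBM as a marginal of the Gibbs distribution of a $2k$-local classical Hamiltonian, with one bit per node and one term per hyperedge.

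The main obstacle is the quantitative error analysis and the resulting parameter optimisation. I must bound the first-order Trotter error of an \emph{imaginary}-time product formula, where the factors are nonunitary and can have large norm, and, more delicately, propagate both the Trotter error and the imaginary-time truncation error through the normalisation to a total-variation bound on the \emph{distribution}; the $1/\eta$ amplification coming from the small overlap $\langle +|\Pi|+\rangle = \eta^2$ enters here. Balancing these error contributions against $\epsilon$ and optimising over $\beta$ and the number of Trotter steps $r$ is what produces the stated exponents, giving $r = O(\eta^{-1/2} L^{3/2} J^{3/2} \Delta^{-3/2} \epsilon^{-1/2} (\log(\eta^{-1}\epsilon^{-1}))^{3/2})$ and hence $O(rL)$ terms and $O(k\,rL)$ bits.

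A final technical point is that the matrix elements $\langle z|e^{-\delta H_a}|z'\rangle$ can be zero (for transitions with no connecting path inside the support of $H_a$) or as small as $O(\delta^2)$ (for length-two paths). The exact zeros correspond to forbidden configurations of zero Boltzmann weight, and the small entries to large but bounded energies of order $\log(1/\delta)$, so after a mild cutoff every energy term $G_a$ is well defined and finite, and the induced error is negligible against $\epsilon$.
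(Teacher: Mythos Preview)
Your overall strategy is the same as the paper's: approximate $\Pi$ by imaginary-time evolution, Trotterise into a product of $k$-local entrywise-nonnegative factors, build a $2k$-local HBM from the chain of intermediate configurations, square it, and apply Observation~\ref{observation}. That is correct.

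There is, however, a genuine quantitative gap. You propose a \emph{first-order} product formula $\bigl(\prod_a e^{-\delta H_a}\bigr)^r$ and then assert that optimising gives $r = O(\eta^{-1/2} L^{3/2} J^{3/2} \Delta^{-3/2} \epsilon^{-1/2} (\log(\eta^{-1}\epsilon^{-1}))^{3/2})$. That claim is false: a first-order splitting has per-step error $O(\delta^2 L^2 J^2)$, so after $r$ steps the accumulated error is $O(r\delta^2 L^2 J^2)$, and balancing $r\delta = \beta = \Theta(\Delta^{-1}\log(\eta^{-1}\epsilon^{-1}))$ against $\eta^{-1} r\delta^2 L^2 J^2 \le \epsilon$ yields $r = \Theta(\eta^{-1}\epsilon^{-1} L^2 J^2 \Delta^{-2}(\log(\eta^{-1}\epsilon^{-1}))^2)$, with all exponents off from the statement. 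The paper obtains the stated $-1/2$ and $3/2$ powers precisely by using the \emph{second-order} (symmetric) formula $e^{-\frac{1}{2}\delta H_L}\cdots e^{-\frac{1}{2}\delta H_1}e^{-\frac{1}{2}\delta H_1}\cdots e^{-\frac{1}{2}\delta H_L}$, whose per-step error is $O(\delta^3 L^3 J^3)$; redoing the optimisation with this bound reproduces the theorem's parameters exactly.

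A smaller point: your treatment of vanishing matrix elements via ``forbidden configurations'' and a cutoff is workable but awkward, since infinite local energies are not literally terms of a classical Hamiltonian and the cutoff introduces yet another error to track. The paper avoids this entirely by perturbing each factor to $e^{-\frac{1}{2}\delta H_j} + \alpha h_j$ with $h_j$ the all-ones matrix on the relevant $k$ qubits; this forces every entry to be at least $\alpha$, keeps all local energies finite and bounded by $\log(\alpha^{-1})$, and contributes an additional $O(\eta^{-1} N\alpha L)$ error that is absorbed by choosing $\alpha = O(\eta T^{-1}\epsilon)$.
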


When we additionally impose the condition that the local stoquastic Hamiltonian is frustration free, then we can prove a stronger result. Here we do not require any conditions on the overlap of the ground subspace with $|+\rangle$.

\begin{theorem}\label{SFF_Gibbs}
Let $H$ be a $k$-local SFF Hamiltonian on $n$ qubits, with $L = \poly(n)$ terms, energy gap $\Delta = 1/\poly(n)$, with each term having operator norm bounded by $J = \poly(n)$. Let $\Pi$ project onto the ground subspace of $H$. The distribution of the ground state $|\psi_0 \rangle = \frac{1}{\eta} \Pi |+ \rangle$ can be approximated to precision $\epsilon$ (in total variation) by the marginal distribution of the Gibbs distribution of a $2k$-local classical Hamiltonian on at most $O(k L^3 J \Delta^{-1} \log(\eta^{-1} \epsilon^{-1}))$ bits, with $O(L^3 J \Delta^{-1} \log(\eta^{-1} \epsilon^{-1}))$ terms, and where each bit is acted on by at most 2 terms. Note $\log(\eta^{-1}) = \poly(n)$ (see Lemma \ref{SFF_P}).
\end{theorem}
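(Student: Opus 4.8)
The plan is to realise the unnormalised ground-state vector $\Pi|+\rangle = \eta|\psi_0\rangle$ as the limit of an imaginary-time power iteration built from \emph{local, entrywise-nonnegative} operators, and then read off a Hyper Boltzmann Machine directly from the transition amplitudes of that iteration. First I would normalise: since $H$ is frustration free I may subtract the ground energy of each term, so that every $H_a \geq 0$, the global ground energy is $0$, and each ground state lies in $\ker H_a$ for all $a$. Setting $B_a = I - H_a/\Lambda'$ with $\Lambda' = \Theta(J) \geq \|H_a\|$, each $B_a$ acts on at most $k$ qubits and has only nonnegative matrix entries (the diagonal because $(H_a)_{xx} \le \|H_a\| \le \Lambda'$, the off-diagonal because $H$ is stoquastic), while fixing the ground space pointwise. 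Define one sweep $A = B_{a_L}\cdots B_{a_1}$.

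The analytic core is to show that $A$ contracts the orthogonal complement of the ground space. Because every $B_a$, and hence $A$ and $A^\dagger$, fixes the ground space, $A$ preserves both the ground space and its complement; on the ground space $A=I$, and I claim $\|A|\phi\rangle\| \le \gamma\|\phi\|$ for $|\phi\rangle \perp$ ground space with $1-\gamma = \Omega(\Delta/\poly(L,J))$. This is exactly the spectral-gap amplification supplied by the detectability lemma, adapted to the soft operators $B_a$ in place of the hard ground-space projectors (which would in general not be nonnegative matrices and so would break the Boltzmann-machine structure). Granting this, $\|A^M|_{\perp}\| \le \gamma^M$, so $A^M|+\rangle = \eta|\psi_0\rangle + |\mathrm{err}\rangle$ with $\||\mathrm{err}\rangle\| \le \gamma^M$; taking $M = O(\poly(L,J)\,\Delta^{-1}\log(\eta^{-1}\epsilon^{-1}))$ makes the error small compared with $\eta$, which, together with the standard bound $\|p-q\|_{\mathrm{TV}} \le \||\phi\rangle-|\psi_0\rangle\|_2$ for Born distributions, controls the total-variation error after normalising. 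The factor $\log(\eta^{-1})$ is only polynomial by Lemma \ref{SFF_P}, so no overlap assumption on $\eta$ is needed.

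Next I would turn $\langle x|A^M|+\rangle$ into an HBM. Expanding the product of the $T = ML$ local factors as a sum over intermediate computational-basis strings, $\langle x|A^M|+\rangle = \sum_{\mathrm{paths}} \prod_{t} \langle \cdot | B_{a_t} | \cdot \rangle$, each factor is nonnegative and, being local, depends only on the $\le k$ qubits in the support of $B_{a_t}$; hence $-\log$ of it is a valid local energy term on $\le 2k$ bits. Using a worldline bookkeeping that introduces a fresh variable for a qubit only when some $B_{a_t}$ acts on it, the intermediate strings become the hidden nodes, with $n + O(kT)$ variables in total and each variable touched by at most the two transitions that write and then read it. To obtain the probability $|\langle x|\psi_0\rangle|^2$ rather than the amplitude, I would square by running two independent worldline histories sharing the same visible endpoint $x$, giving $f(x) = \langle x|A^M|+\rangle^2$; this doubles the hidden variables while keeping every hyperedge $2k$-local. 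Finally, Observation \ref{observation}(1) converts this $2k$-local HBM, with $O(kL^3J\Delta^{-1}\log(\eta^{-1}\epsilon^{-1}))$ nodes, $O(L^3J\Delta^{-1}\log(\eta^{-1}\epsilon^{-1}))$ hyperedges, and each node in at most two of them, into a marginal of the Gibbs distribution of a $2k$-local classical Hamiltonian with the advertised parameters.

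The step I expect to be the main obstacle is the contraction estimate for $A$: establishing $1-\gamma = \Omega(\Delta/\poly(L,J))$ for the non-Hermitian product of soft operators $B_a$, and pinning down the precise polynomial in $L$ and $J$ so that $M$, and hence the node count, comes out as stated, especially without a bounded-degree assumption on the interaction graph. The logarithmic-in-$\epsilon$ scaling, which distinguishes this result from Theorem \ref{stoq_Gibbs}, hinges entirely on this geometric convergence, so the detectability-lemma bound (or a direct variational/Perron--Frobenius argument tailored to the $B_a$) is where the real work lies; the remaining error propagation and the HBM-to-Gibbs translation are routine given Observation \ref{observation}.
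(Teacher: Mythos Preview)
Your overall architecture---local nonnegative power iteration, worldline HBM, squaring, then Observation~\ref{observation}---matches the paper exactly. The one substantive difference is the choice of local operators, and here you have made life harder than necessary based on a misconception.

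You avoid the local ground-space projectors $\Pi_j$ because you believe they ``would in general not be nonnegative matrices''. This is false: since each $H_j$ is stoquastic, $e^{-\tau H_j}$ is entrywise nonnegative for every $\tau\ge 0$, and $\Pi_j = \lim_{\tau\to\infty} e^{-\tau (H_j - E_{0,j})}$ is a pointwise limit of nonnegative matrices, hence itself entrywise nonnegative. The paper therefore takes $P_i = \Pi_j + \alpha h_j$ (the $\alpha h_j$ perturbation makes every entry strictly positive so that $-\log\langle x'|P_i|h'\rangle$ is finite, a point you also need for your $B_a$ but do not address). With hard projectors in hand, the contraction estimate is the \emph{standard} Detectability Lemma
\[
\bigl\|\Pi - (\Pi_L\cdots\Pi_1)^N\bigr\|_{\mathrm{op}} = O\bigl(e^{-\frac14 N\Delta J^{-1}L^{-2}}\bigr),
\]
which yields the advertised $T = O(L^3 J\Delta^{-1}\log(\eta^{-1}\epsilon^{-1}))$ directly. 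Your route via $B_a = I - H_a/\Lambda'$ is not wrong, but the ``soft detectability'' bound you need is not off-the-shelf and would have to be proved separately, with no guarantee of recovering the same $L^3$ dependence; the paper sidesteps the obstacle you flag entirely. Everything downstream---the $2k$-local hyperedges, the $\le 2$ hyperedges per node, the doubling to pass from wavefunction to distribution---is identical to what you wrote.
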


We then turn to the converse: given a local classical Hamiltonian $H_c$, can we find a SFF Hamiltonian $H$ which has as its unique ground state a coherent version of the Gibbs distribution of $H_c$? This question has been answered affirmatively by the results of Verstraete et al. \cite{verstraete2006criticality} and Somma et al. \cite{somma2007quantum}, which we summarise in Theorem \ref{Gibbs_SFF}.

\begin{theorem}\label{Gibbs_SFF}
Consider the $n$-qubit coherent Gibbs state on a $k$-local classical Hamiltonian, with $T$ terms, where each qubit is acted on by at most $k'$ terms. This Gibbs state is the unique ground state of a $(k'(k-1)+1)$-local SFF Hamiltonian with $n$ terms.
\end{theorem}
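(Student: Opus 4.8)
The plan is to start from the factorisation of the coherent Gibbs state as $|\psi\rangle = \frac{1}{\sqrt{Z}} G |+\rangle$, where $G = e^{-H_c/2} = \sum_x e^{-H_c(x)/2}|x\rangle\langle x|$ is a positive diagonal operator and $|+\rangle$ is the uniform superposition. Since $|+\rangle$ is the frustration-free ground state of the trivial Hamiltonian $\sum_j \frac{1}{2}(I - X_j)$, each term $\frac{1}{2}(I-X_j) = |-\rangle\langle-|_j$ annihilating $|+\rangle$, I would look for a ``dressed'' analogue of these single-qubit projectors: one Hermitian term $H_j \succeq 0$ per qubit $j$, annihilating $|\psi\rangle$, so that $H = \sum_{j=1}^n H_j$ is frustration-free with exactly $n$ terms and has $|\psi\rangle$ in its kernel.

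Write $x^{(j)}$ for the string $x$ with its $j$-th bit flipped. Flipping qubit $j$ partitions the computational basis into disjoint pairs $\{|x\rangle, |x^{(j)}\rangle\}$, and the component of $|\psi\rangle$ inside each pair is proportional to the local Gibbs vector $|\phi_{x,j}\rangle \propto e^{-H_c(x)/2}|x\rangle + e^{-H_c(x^{(j)})/2}|x^{(j)}\rangle$. I would define $H_j$ to act blockwise, within each pair, as the rank-one projector onto the orthogonal complement of $|\phi_{x,j}\rangle$, that is $H_j = \bigoplus_{\text{pairs }\{x,\,x^{(j)}\}} \big(I - |\hat\phi_{x,j}\rangle\langle \hat\phi_{x,j}|\big)$ with $|\hat\phi_{x,j}\rangle$ the normalised local Gibbs vector. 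By construction $H_j$ is a projector, hence $H_j \succeq 0$, and it kills every pairwise component of $|\psi\rangle$, so $H_j|\psi\rangle = 0$. Its only off-diagonal entry within a pair is $-\frac{e^{-(H_c(x)+H_c(x^{(j)}))/2}}{e^{-H_c(x)}+e^{-H_c(x^{(j)})}} < 0$, so each $H_j$ is stoquastic and $H = \sum_j H_j$ is SFF.

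The crux, and the step I expect to be the main obstacle, is \emph{locality}: a priori the entries of $H_j$ involve $H_c(x)$ for every $x$, which is global. The observation that makes the argument go through is that, writing $H_c = A_j + B_j$ where $A_j$ collects the (at most $k'$) terms acting on qubit $j$ and $B_j$ collects the rest, the factor $e^{-B_j(x)}$ is common to $x$ and $x^{(j)}$ since no term of $B_j$ touches qubit $j$, and therefore cancels in every normalised entry of $H_j$. For instance the diagonal entry at $|x\rangle$ becomes $\frac{e^{-A_j(x^{(j)})}}{e^{-A_j(x)}+e^{-A_j(x^{(j)})}}$, and similarly for the off-diagonal entry. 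Hence $H_j$ depends only on the qubits appearing in $A_j$, namely qubit $j$ together with the at most $k-1$ further qubits in each of the at most $k'$ terms containing $j$. This neighbourhood has size at most $k'(k-1)+1$, which is precisely the claimed locality.

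Finally, for uniqueness I would note that the ground space is $\bigcap_j \ker H_j$, and a vector lies in $\ker H_j$ exactly when, inside every flip-pair, its two amplitudes are in the ratio $e^{-(H_c(x^{(j)})-H_c(x))/2}$ fixed by $|\phi_{x,j}\rangle$. Imposing this for all $j$ fixes the amplitude ratio across every edge of the Boolean hypercube on $\{0,1\}^n$; since single-bit flips make this graph connected, all amplitudes are determined by any single one, so the common kernel is one-dimensional. Equivalently, $H$ is stoquastic with an irreducible off-diagonal hopping graph, so Perron--Frobenius guarantees a unique, strictly positive ground state, which must be $|\psi\rangle$.
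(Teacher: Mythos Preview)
Your proof is correct and follows essentially the same route as the paper: one stoquastic, positive-semidefinite term per qubit that annihilates the coherent Gibbs state on each bit-flip pair, locality coming from the cancellation of the $e^{-B_j}$ factor common to $x$ and $x^{(j)}$, and uniqueness via Perron--Frobenius on the irreducible hypercube hopping graph. The only cosmetic difference is the form of the local term: the paper (following Verstraete et al.\ and Somma et al.) takes $H_j = \Gamma_j - X_j$ with $\Gamma_j = X_j e^{-H_c/2} X_j e^{H_c/2}$, which in each pair $\{x,x^{(j)}\}$ is the matrix $\bigl(\begin{smallmatrix} b/a & -1 \\ -1 & a/b \end{smallmatrix}\bigr)$ (writing $a=e^{-H_c(x)/2}$, $b=e^{-H_c(x^{(j)})/2}$), i.e.\ exactly your rank-one projector rescaled blockwise by the positive factor $(a^2+b^2)/(ab)$; the kernel, stoquasticity, and locality analysis are identical.
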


Theorem \ref{Gibbs_SFF} has two immediate corollaries. From Theorem \ref{stoq_Gibbs} we know that a stoquastic ground state is simulated by a classical thermal distribution, and Theorem \ref{Gibbs_SFF} says that coherent states corresponding to classical thermal distributions are SFF ground states. Combining these two results has the following implication for general local stoquastic Hamiltonian: given a local stoquastic Hamiltonian $H$, we can find a SFF Hamiltonian $H'$ on a larger set of qubits such that the marginal distribution of the ground state of $H'$ approximates the distribution of the ground state of $H$. If $H$ is $k$-local, the classical Hamiltonian from Theorem \ref{stoq_Gibbs} is $2k$-local, so $H'$ from Theorem \ref{Gibbs_SFF} is $4k$-local (we have $k'=2$ in this case). To our knowledge, this is the first efficient embedding of the ground state of an arbitrary local stoquastic Hamiltonian into the local SFF Hamiltonian.

\begin{corollary}\label{stoq_SFF}
Let $H$ be a $k$-local stoquastic Hamiltonian on $n$ qubits, with $L = \poly(n)$ terms, energy gap $\Delta = 1/\poly(n)$, with each term having operator norm bounded by $J = \poly(n)$. Let $\Pi$ project onto the ground subspace of $H$, and assume $\eta = \sqrt{\langle +|\Pi |+ \rangle} \geq 1/\poly(n)$. The distribution of the ground state $|\psi_0 \rangle = \frac{1}{\eta} \Pi |+ \rangle$ can be approximated to precision $\epsilon$ (in total variation) by a marginal distribution of the unique ground state of a $4k$-local SFF Hamiltonian on $O(k \eta^{-1/2} L^{5/2} J^{3/2} \Delta^{-3/2} \epsilon^{-1/2} (\log(\eta^{-1} \epsilon^{-1}))^{3/2})$ qubits, with a term for each qubit.
\end{corollary}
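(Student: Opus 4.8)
The plan is to derive the corollary simply by composing Theorem \ref{stoq_Gibbs} with Theorem \ref{Gibbs_SFF}, and then tracking how the locality, qubit count, term count, and approximation error propagate through the composition.

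First I would apply Theorem \ref{stoq_Gibbs} to the stoquastic Hamiltonian $H$. This produces a $2k$-local classical Hamiltonian $H_c$ on $N = O(k \eta^{-1/2} L^{5/2} J^{3/2} \Delta^{-3/2} \epsilon^{-1/2} (\log(\eta^{-1}\epsilon^{-1}))^{3/2})$ bits, in which each bit is acted on by at most two terms, and whose Gibbs distribution $p_c$ has a marginal over the auxiliary bits that approximates the ground-state distribution $|\langle x|\psi_0\rangle|^2$ to total variation distance $\epsilon$. Next I would pass from this classical distribution to its coherent counterpart. The coherent Gibbs state $|\phi\rangle = \frac{1}{\sqrt{Z}} \sum_z e^{-H_c(z)/2} |z\rangle$ on the same $N$ qubits satisfies $|\langle z|\phi\rangle|^2 = p_c(z)$ by construction, so measuring it in the computational basis reproduces $p_c$ exactly. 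Hence the marginal of $|\phi\rangle$ over the auxiliary qubits coincides with the marginal of $p_c$ and still approximates $|\langle x|\psi_0\rangle|^2$ to within $\epsilon$; crucially, no additional error is introduced at this step.

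Then I would invoke Theorem \ref{Gibbs_SFF} to realise $|\phi\rangle$ as a unique ground state. Since $H_c$ is $2k$-local with each qubit touched by at most $k' = 2$ terms, substituting $k \mapsto 2k$ and $k' = 2$ into the locality bound $k'(k-1)+1$ gives $2(2k-1)+1 = 4k-1 \leq 4k$. Theorem \ref{Gibbs_SFF} thus yields a $4k$-local SFF Hamiltonian on the same $N$ qubits, with one term per qubit, whose \emph{unique} ground state is $|\phi\rangle$. Its marginal over the auxiliary qubits therefore approximates the distribution of $|\psi_0\rangle$ to precision $\epsilon$ in total variation, and the stated qubit and term counts both match $N$.

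Because the argument is merely a composition of two established theorems, I do not expect a genuine mathematical obstacle. The only points demanding care are the locality arithmetic under the substitution $k \mapsto 2k$, $k' = 2$ (confirming $4k-1 \leq 4k$), and the observation that passing to the coherent Gibbs state and then marginalising over the auxiliary register preserves -- rather than degrades -- the $\epsilon$ approximation guaranteed by Theorem \ref{stoq_Gibbs}, so that the final error stays exactly $\epsilon$ rather than accumulating across the two steps.
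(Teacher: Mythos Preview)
Your proposal is correct and follows essentially the same route as the paper: the corollary is obtained by composing Theorem~\ref{stoq_Gibbs} with Theorem~\ref{Gibbs_SFF}, substituting $k \mapsto 2k$ and $k'=2$ to obtain the $4k$-locality bound. Your treatment is in fact slightly more explicit than the paper's, which simply states the composition; your observations about the locality arithmetic $2(2k-1)+1 = 4k-1 \leq 4k$ and the absence of additional error when passing to the coherent Gibbs state are the right checks to spell out.
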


From Observation~\ref{observation} we know that HBMs are equivalent to classical thermal distributions, and in Section~\ref{hbms} we showed that all Boltzmann machines are HBMs. This together with the result of Theorem \ref{Gibbs_SFF} leads to the following Corollary \ref{BM_SFF}: given a Boltzmann machine representing distribution $f$, we can find a SFF Hamiltonian $H$ such that $f$ is a marginal distribution of the ground state of $H$. The classical Hamiltonian corresponding to a Boltzmann machine is 2-local, thus $H$ is $(k'+1)$-local.

\begin{corollary}\label{BM_SFF}
Consider a Boltzmann machine with $n$ visible nodes and $m$ hidden nodes, where each node has at most $k'$ connections. The distribution given by this Boltzmann machine is a marginal distribution of the unique ground state of a $(k'+1)$-local SFF Hamiltonian on $n+m$ qubits, with $n+m$ terms.
\end{corollary}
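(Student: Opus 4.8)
The plan is to obtain this corollary as a direct composition of Observation~\ref{observation} and Theorem~\ref{Gibbs_SFF}, treating the Boltzmann machine first as a HBM and then routing its induced classical Gibbs distribution through the Gibbs-to-SFF construction.

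First I would observe that a Boltzmann machine is nothing but a $2$-local HBM: its only allowed hyperedges have size at most $2$ (the single-node bias terms $F_{\{y\}} = -a_y y$ and the edge terms $F_{\{y_1,y_2\}} = -W_{y_1,y_2} y_1 y_2$), so the hypergraph is a genuine graph. With $n$ visible and $m$ hidden nodes and each node contained in at most $k'$ edges, part~1 of Observation~\ref{observation} applies with locality $k = 2$ and edge-degree $k'$. This immediately expresses the distribution $f$ produced by the Boltzmann machine as a marginal distribution (over the visible bits) of the Gibbs distribution of a $2$-local classical Hamiltonian $H_c$ on $n+m$ bits, in which each bit is acted on by at most $k'$ terms.

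Next I would feed $H_c$ into Theorem~\ref{Gibbs_SFF}. The coherent Gibbs state of $H_c$ lives on $n+m$ qubits, $H_c$ is $k=2$-local, and each qubit is acted on by at most $k'$ terms, so the theorem produces a $(k'(k-1)+1)$-local SFF Hamiltonian; substituting $k=2$ gives exactly $(k'+1)$-locality, with one term per qubit, i.e.\ $n+m$ terms, and the coherent Gibbs state as its unique ground state. The final step is to note that measuring the coherent Gibbs state $\frac{1}{\sqrt{Z}}\sum_x e^{-H_c(x)/2}|x\rangle$ in the computational basis reproduces precisely the Gibbs distribution $\frac{1}{Z} e^{-H_c(x)}$; hence the marginal of the ground-state distribution over the visible qubits equals the marginal Gibbs distribution from the previous paragraph, which is $f$.

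The only points requiring care --- rather than any genuine obstacle --- are the bookkeeping of the locality exponent (confirming $k'(k-1)+1$ collapses to $k'+1$ at $k=2$) and verifying that the ``marginal distribution of the ground state'' in the statement is taken over exactly the visible qubits, so that it lines up with the visible/hidden partition inherited from the Boltzmann machine. Since Theorem~\ref{Gibbs_SFF} is quoted as already established, no new analysis of frustration-freeness or of ground-state uniqueness is needed; the corollary is purely a matter of chaining the two earlier results and matching parameters.
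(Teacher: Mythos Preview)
Your proposal is correct and follows essentially the same route as the paper: the paper also derives this corollary by noting that a Boltzmann machine is a $2$-local HBM, invoking Observation~\ref{observation} to obtain a $2$-local classical Hamiltonian on $n+m$ bits with each bit in at most $k'$ terms, and then applying Theorem~\ref{Gibbs_SFF} with $k=2$ to produce the $(k'+1)$-local SFF Hamiltonian. Your write-up is in fact more explicit than the paper's one-line justification about the parameter bookkeeping and the passage from coherent Gibbs state to marginal distribution.
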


The following results investigate the representability of the ground state distributions using a particular class of Boltzmann machines. Theorem \ref{stoq_DBM} and \ref{SFF_DBM} show how to represent the distribution of the ground state of a stoquastic and SFF Hamiltonian  using a DBM.

\begin{theorem}\label{stoq_DBM}
Let $H$ be a $k$-local stoquastic Hamiltonian on $n$ qubits, with $L = \poly(n)$ terms, energy gap $\Delta = 1/\poly(n)$, with each term having operator norm bounded by $J = \poly(n)$. Let $\Pi$ project onto the ground subspace of $H$, and assume $\eta = \sqrt{\langle +|\Pi |+ \rangle} \geq 1/\poly(n)$. The distribution of the ground state $|\psi_0 \rangle = \frac{1}{\eta} \Pi |+\rangle$ can be represented to precision $\epsilon$ (in total variation) by a DBM with $O(2^{2k} \eta^{-1/2} L^{5/2} J^{3/2} \Delta^{-3/2} \epsilon^{-1/2} (\log(\eta^{-1} \epsilon^{-1}))^{3/2})$ hidden nodes, where each node has at most $2^{2k+1}$ connections.
\end{theorem}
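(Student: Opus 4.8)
The plan is to obtain the DBM by post-processing the classical Gibbs distribution supplied by Theorem~\ref{stoq_Gibbs}. First I would apply Theorem~\ref{stoq_Gibbs} with precision $\epsilon/2$ to produce a $2k$-local classical Hamiltonian $H_c = \sum_e F_e$ on $M = O(k\eta^{-1/2}L^{5/2}J^{3/2}\Delta^{-3/2}\epsilon^{-1/2}(\log(\eta^{-1}\epsilon^{-1}))^{3/2})$ bits, with $T$ terms (the same bound without the factor $k$), in which each bit is acted on by at most two terms, and whose marginal over the $n$ bits corresponding to the qubits approximates the distribution of $|\psi_0\rangle$. By Observation~\ref{observation} this marginal is already the distribution of a $2k$-local HBM; what remains is to re-express it as a bona fide DBM, i.e.\ a $2$-local (bilinear) machine whose nodes split into visible, middle and deep layers with edges only between consecutive layers.

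The key device is a gadget that rewrites each Boltzmann factor $e^{-F_e(x_e)}$, where $x_e$ denotes the restriction to the $\le 2k$ bits of edge $e$, as a marginal over $2^{|e|} \le 2^{2k}$ fresh binary units in Boltzmann form. After shifting $F_e$ by a constant so that $F_e(a) < 0$ for every configuration $a \in \{0,1\}^{|e|}$ (a global shift that only rescales the partition function), I introduce one hidden unit $h_a$ per configuration and give it energy $h_a E_a(x_e)$ with $E_a(x_e) = C\, d(x_e,a) + \beta_a$, where $d(x_e,a) = \sum_i (x_i \oplus a_i)$ is the Hamming distance, $\beta_a = -\log(e^{-F_e(a)} - 1)$, and $C$ is a large penalty. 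Since $d(x_e,a)$ is affine in the bits, $h_a E_a(x_e)$ splits into a bias on $h_a$ and bilinear couplings $h_a x_i$, so the gadget is in Boltzmann form and couples $h_a$ only to the bits of $e$. Summing over the $h_a$ yields $\prod_a (1 + e^{-E_a(x_e)})$, whose factor at $a = x_e$ equals $e^{-F_e(x_e)}$ exactly while every other factor is $1 + O(e^{-C})$; hence $C = O(\poly)$ makes each gadget reproduce its factor to any prescribed relative accuracy.

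Applying the gadget to every term of $H_c$ makes the interaction graph bipartite between the $M$ bits and the gadget units, with no bit-bit edges, and this is exactly what permits a three-layer arrangement: the $n$ visible bits form the visible layer, all gadget units form the middle layer, and the remaining $M - n$ marginalised bits form the deep layer. Each gadget unit couples only to bits, hence only to the visible and deep layers, and there are no direct visible-deep edges, so the DBM adjacency constraints hold. For the counts, the hidden nodes are the $\le 2^{2k}T$ gadget units together with the $\le M$ marginalised bits, giving $O(2^{2k}\eta^{-1/2}L^{5/2}J^{3/2}\Delta^{-3/2}\epsilon^{-1/2}(\log(\eta^{-1}\epsilon^{-1}))^{3/2})$ hidden nodes; each gadget unit touches at most $2k$ bits, and each bit lies in at most two terms and so touches at most $2 \cdot 2^{2k} = 2^{2k+1}$ gadget units, which gives the stated maximum degree.

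The step that needs the most care is the error analysis. I would choose $C$ large enough that the accumulated relative error from the $T$ gadgets is at most $\epsilon/2$, check that these per-factor multiplicative errors survive the final marginalisation over the deep-layer bits, and combine with the $\epsilon/2$ from Theorem~\ref{stoq_Gibbs} through the triangle inequality to obtain an $O(\epsilon)$ total-variation bound on the visible marginal. Crucially, $C$ controls only the magnitudes of the couplings and biases and never the number of nodes or edges, so driving the gadget error down does not disturb any of the resource counts above.
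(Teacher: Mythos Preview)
Your proposal is correct and follows essentially the same route as the paper: the paper combines Theorem~\ref{stoq_HBM} (stoquastic ground state $\to$ $2k$-local HBM with each node in at most two hyperedges) with Theorem~\ref{HBM_DBM} (HBM $\to$ DBM via the Le~Roux--Bengio gadget of Lemma~\ref{lerouxbengio}), and your Hamming-distance gadget is precisely that lemma rewritten---indeed the paper's weight vector $w_i=a(x_i-\tfrac12)$ satisfies $w_i^T x_l - w_i^T x_i = -\tfrac{a}{2}\,d(x_l,x_i)$. The only cosmetic difference is that you detour through the classical Hamiltonian of Theorem~\ref{stoq_Gibbs} and then back to an HBM via Observation~\ref{observation}, which is harmless since Theorem~\ref{stoq_Gibbs} is itself derived from Theorem~\ref{stoq_HBM} and Observation~\ref{observation}.
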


Similarly to the above, when we restrict our stoquastic Hamiltonian to be frustration free, this representation becomes simpler.

\begin{theorem}\label{SFF_DBM}
Let $H$ be a $k$-local SFF Hamiltonian on $n$ qubits, with $L = \poly(n)$ terms, energy gap $\Delta = 1/\poly(n)$, with each term having operator norm bounded by $J = \poly(n)$. Let $\Pi$ project onto the ground subspace of $H$. The distribution of the ground state $|\psi_0 \rangle = \frac{1}{\eta} \Pi |+ \rangle$ can be represented to precision $\epsilon$ (in total variation) by a DBM with $O(2^{2k} L^3 J \Delta^{-1} \log(\eta^{-1} \epsilon^{-1}))$ hidden nodes, where each node has at most $2^{2k+1}$ connections. Note $\log(\eta^{-1}) = \poly(n)$ (see Lemma \ref{SFF_P}).
\end{theorem}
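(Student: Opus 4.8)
The plan is to reduce to Theorem \ref{SFF_Gibbs} and then locally convert the resulting classical Hamiltonian into a DBM. First I would apply Theorem \ref{SFF_Gibbs} to obtain a $2k$-local classical Hamiltonian $H_c$ on $N = O(kL^3 J \Delta^{-1}\log(\eta^{-1}\epsilon^{-1}))$ bits, with $M = O(L^3 J \Delta^{-1}\log(\eta^{-1}\epsilon^{-1}))$ terms, each bit acted on by at most two terms, whose marginal Gibbs distribution over the $n$ original qubits is $\epsilon$-close in total variation to the distribution of $|\psi_0\rangle$. By Observation \ref{observation} this marginal is already an HBM whose hyperedges are the terms of $H_c$; the remaining work is to turn each $2k$-local hyperedge into the pairwise, layered structure demanded by a DBM.

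The key gadget: for a term $F_e$ on a set of bits $x_e$ (with $|e| \leq 2k$) I would introduce one hidden node $h_b$ for each configuration $b \in \{0,1\}^{|e|}$, connected to every bit of $e$ through the energy $E_b(x_e,h_b) = h_b(\alpha D_b(x_e) + \beta_b)$, where $D_b(x_e) = \sum_{i : b_i=0} x_i + \sum_{i : b_i = 1}(1-x_i)$ is the Hamming distance from $x_e$ to $b$. Since $D_b$ is linear in the $x_i$, each $E_b$ is a sum of pairwise $h_b x_i$ couplings and a bias on $h_b$, as required. Summing over $h_b \in \{0,1\}$ gives a factor $1 + e^{-\alpha D_b(x_e) - \beta_b}$, which for $x_e = b$ equals $1 + e^{-\beta_b}$ exactly and for $x_e \neq b$ is suppressed by large $\alpha$; the product over all $b$ therefore collapses to $1 + e^{-\beta_{x_e}}$ up to an error vanishing as $\alpha \to \infty$. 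Choosing a constant $C_e > \max_b e^{F_e(b)}$ and setting $\beta_b = -\log(C_e e^{-F_e(b)} - 1)$ makes this factor equal to $C_e\, e^{-F_e(x_e)}$. Because $|F_e| \leq J$, the weights $\alpha$ and $\beta_b$ remain $\poly(n)$, and the per-term factor $C_e$ is absorbed into the overall normalisation.

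To respect the three-layer structure I would place the $n$ visible qubits in the visible layer, all $2^{2k}M$ gadget nodes in the middle layer, and the $N-n$ remaining bits of $H_c$ in the deep layer. Each gadget node connects only to the bits of its own term, which now lie in the visible and/or deep layers, so every edge runs between consecutive layers and no visible--deep or intra-layer edges are ever needed. Given a fixed assignment of all genuine bits, the gadget-node sums factorise over terms, so marginalising over the middle and deep nodes reproduces $\prod_e C_e \cdot e^{-H_c}$ summed over the deep bits, i.e.\ the marginal Gibbs distribution of $H_c$ up to the global constant. Counting resources, the gadget nodes dominate the $N-n = O(kL^3 J\Delta^{-1}\log(\eta^{-1}\epsilon^{-1}))$ deep bits, giving $O(2^{2k} L^3 J \Delta^{-1}\log(\eta^{-1}\epsilon^{-1}))$ hidden nodes; each original bit lies in at most two terms and so connects to at most $2 \cdot 2^{2k} = 2^{2k+1}$ gadget nodes, while each gadget node has at most $2k \leq 2^{2k+1}$ connections, yielding the stated degree bound.

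I expect the main obstacle to be the interplay between the rigid DBM layering and the gadget: one must verify that assigning the visible qubits and the auxiliary bits to the two \emph{non-adjacent} layers never forces an edge across them, which hinges on the fact that every higher-order interaction is routed through a single middle-layer node adjacent to both. A secondary technical point is the error budget -- invoking Theorem \ref{SFF_Gibbs} at precision $\epsilon/2$ and taking $\alpha = O(\log(M 2^{2k}\epsilon^{-1}) + J)$ so that the accumulated gadget error over all $M$ terms is at most $\epsilon/2$, while keeping all weights polynomially bounded and leaving the asymptotic node and degree counts unchanged.
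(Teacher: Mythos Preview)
Your proposal is correct and follows essentially the same route as the paper. The paper derives Theorem~\ref{SFF_DBM} by combining Theorem~\ref{SFF_HBM} (the $2k$-local HBM representation of the SFF ground state) with Theorem~\ref{HBM_DBM} (converting any $k$-local HBM into a DBM via the Le~Roux--Bengio gadget of Lemma~\ref{lerouxbengio}); you instead invoke Theorem~\ref{SFF_Gibbs} to get the classical Hamiltonian and then supply your own explicit version of the same Hamming-distance gadget, which is exactly the Le~Roux--Bengio construction up to reparametrisation. The layering argument, the node and degree counts, and the error budgeting all match.
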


Lastly, we show that for classical thermal distribution of local Hamiltonians it suffices to restrict the model to a RBM.

\begin{theorem}\label{Gibbs_RBM}
Consider the Gibbs distribution of a $k$-local classical Hamiltonian on $n$ bits, with $T$ terms, where each bit is acted on by at most $k'$ terms. This distribution can be represented to arbitrary precision* (in total variation) by a RBM with $n$ visible nodes, at most $2^k T$ hidden nodes, where each node has at most $k' 2^k$ connections.
\end{theorem}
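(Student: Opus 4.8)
The plan is to exploit the defining feature of an RBM: once the hidden variables are summed out, the output factorises over the hidden nodes. Writing the target Gibbs distribution as $p(x) = \frac{1}{Z} e^{-H(x)} = \frac{1}{Z}\prod_{a=1}^{T} e^{-H_a(x)}$, it suffices to build, for each local term $H_a$ acting on a bit-set $S_a$ with $|S_a| \le k$, a small \emph{gadget} of hidden nodes whose contribution to the RBM output reproduces the factor $e^{-H_a(x_{S_a})}$ (up to a global constant absorbed into the normalisation). Recall that summing a single binary hidden node $h_j$ out of an RBM energy produces the multiplicative factor $1 + \exp(b_j + \sum_y W_{y,j}x_y)$, so each gadget must assemble the desired local factor as a product of such $(1 + e^{(\cdot)})$ terms.

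For term $a$ I would introduce one hidden node for each configuration $c \in \{0,1\}^{S_a}$, giving $2^k$ nodes per term and $2^k T$ in total. Each such node is connected only to the bits of $S_a$, and its exponent is set to $z_c(x) = \lambda\,(\ell_c(x) - k) + \mu_c$, where $\ell_c(x) = \sum_{y \in S_a}\bigl(c_y x_y + (1-c_y)(1-x_y)\bigr)$ counts the bits of $x_{S_a}$ matching $c$. This affine function of $x$ equals $k$ precisely when $x_{S_a} = c$ and is at most $k-1$ otherwise, so $\ell_c$ acts as a soft indicator. In the limit of large $\lambda$, the node ``fires'' only on its own configuration: when $x_{S_a} = c$ it contributes $1 + e^{\mu_c}$, and when $x_{S_a} \neq c$ its exponent is $\le \mu_c - \lambda$ so the factor tends to $1$. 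For a fixed $x$ only the node matching $c^\star = x_{S_a}$ is active, and the gadget's total contribution tends to $1 + e^{\mu_{c^\star}}$. I then choose $\mu_c$ via $1 + e^{\mu_c} = \kappa_a\, e^{-H_a(c)}$, which is solvable for every $c$ once $\kappa_a > \max_c e^{H_a(c)}$ so that the right-hand side exceeds $1$.

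Multiplying the gadgets for all terms then yields $f(x) \to \bigl(\prod_a \kappa_a\bigr)\, e^{-H(x)}$ in the limit $\lambda \to \infty$, which normalises exactly to $p(x)$. The resource counts follow immediately: there are at most $2^k T$ hidden nodes; each hidden node touches the $\le k$ bits of a single term, while each visible bit lies in at most $k'$ terms and is therefore wired to at most $k'2^k$ hidden nodes, giving maximum degree $k'2^k$ as claimed. The one point demanding care --- and the source of the asterisk on ``arbitrary precision'' --- is that this construction is exact only in the limit of unbounded weights $\lambda$. For finite $\lambda$ each inactive factor deviates from $1$ by $O(e^{-\lambda})$; since there are $2^k T$ factors, the accumulated multiplicative error is $O\bigl(2^k T\, e^{-\lambda}\bigr)$, so taking $\lambda = O\bigl(\log(2^k T \epsilon^{-1})\bigr)$ drives the total variation distance below any target $\epsilon$. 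Thus the representation is achievable to arbitrary precision with weights growing only logarithmically in the inverse error, but requires the weights to be unbounded for an exactly faithful encoding.
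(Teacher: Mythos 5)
Your proposal is correct and follows essentially the same route as the paper: the paper proves this theorem by viewing the Gibbs distribution as an HBM with no hidden nodes (Observation~\ref{observation}) and then applying Theorem~\ref{HBM_DBM}, whose per-hyperedge gadget is exactly your construction --- one hidden node per configuration of each local term, with large ``soft indicator'' weights and a bias chosen so that the active factor reproduces the local Boltzmann factor, exact only in the infinite-weight limit (this is Lemma~\ref{lerouxbengio}, adapted from Le~Roux and Bengio). The only differences are bookkeeping: you keep all $2^k$ nodes per term and guarantee solvability of the bias equation via the constant $\kappa_a > \max_c e^{H_a(c)}$, whereas the paper normalises each factor to a distribution $\pi_e$ and omits nodes for configurations too close to the minimum probability.
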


It should be noted that all of the above mappings are explicit: in the proofs, we explicitly construct the mappings.

{\bf Remark*}: by stating that a system $B$ can approximate another system $A$ to `arbitrary precision' we mean that $B$ approximates $A$ with error $\epsilon$, and the size and connectivity of $B$ do not depend on $\epsilon$. The Hamiltonian of $B$, however, will depend on $\epsilon$ in general. In all cases, if we set $\epsilon = 1/\poly(n)$, it suffices to take a Hamiltonian whose values are bounded by $\poly(n)$.

\bigskip

\section{HBMs can represent stoquastic ground states}

In order to prove Theorems \ref{stoq_Gibbs} and \ref{SFF_Gibbs}, we would like to show that the ground states of stoquastic and SFF Hamiltonians respectively can be represented in distribution by HBMs. We can then recall Observation~\ref{observation} to deduce that stoquastic/SFF ground states can be approximated by a marginal distribution of the Gibbs distribution of a local classical Hamiltonian.

\begin{theorem}\label{stoq_HBM}
Let $H$ be a $k$-local stoquastic Hamiltonian on $n$ qubits, with $L = \poly(n)$ terms, energy gap $\Delta = 1/\poly(n)$, with each term having operator norm bounded by $J = \poly(n)$. Let $\Pi$ project onto the ground subspace of $H$, and assume $\eta = \sqrt{\langle +|\Pi |+ \rangle} \geq 1/\poly(n)$. The ground state $|\psi_0 \rangle = \frac{1}{\eta} \Pi |+ \rangle$ can be represented to precision $\epsilon$ in distribution by a $2k$-local HBM with $O(k \eta^{-1/2} L^{5/2} J^{3/2} \Delta^{-3/2} \epsilon^{-1/2} (\log(\eta^{-1} \epsilon^{-1}))^{3/2})$ hidden nodes, $O(\eta^{-1/2} L^{5/2} J^{3/2} \Delta^{-3/2} \epsilon^{-1/2} (\log(\eta^{-1} \epsilon^{-1}))^{3/2})$ hyperedges, where each node is contained in at most 2 hyperedges.
\end{theorem}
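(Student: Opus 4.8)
The plan is to realise $|\psi_0\rangle$ by imaginary-time projection, to approximate the imaginary-time propagator by a local product formula so that all matrix elements stay non-negative, and then to read off the resulting sum-over-paths directly as the output $f(x)=\sum_h \exp(-F(x,h))$ of an HBM. A preliminary simplification: a global energy shift $H \mapsto H - E_0 I$ multiplies $f$ by a constant and therefore leaves the normalised output $f(x)/\sum_y f(y)$ unchanged, so in the analysis I may assume the ground energy is $0$. Then $\|e^{-\beta H}|+\rangle\| \in [\eta,\eta + e^{-\beta\Delta}]$. Since each $H_a$ has non-positive off-diagonal entries, $-H_a$ is a Metzler matrix and every $e^{-\tau H_a}$ (hence $e^{-\beta H}$) has non-negative entries for $\tau \ge 0$, which is what will make the construction a genuine HBM. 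Spectrally, $e^{-\beta H}|+\rangle = \eta|\psi_0\rangle + R$ with $\|R\| \le e^{-\beta\Delta}$, so the normalised state has fidelity $1 - O(e^{-2\beta\Delta}/\eta^2)$ with $|\psi_0\rangle$ and the two computational-basis distributions lie within total variation $O(e^{-\beta\Delta}/\eta)$. Choosing $\beta = \Theta(\Delta^{-1}\log(\eta^{-1}\epsilon^{-1}))$ makes this at most $\epsilon/2$; this is the source of the $\Delta^{-1}$ and $\log(\eta^{-1}\epsilon^{-1})$ factors.

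For the propagator I would use a \emph{symmetric (second-order)} product formula: one sweep applies $e^{-(\tau/2)H_a}$ through $a=1,\dots,L$ and then back through $a=L,\dots,1$, and I compose $M=\beta/\tau$ such sweeps to form $\tilde U$. Each factor is entrywise non-negative, so $\tilde U$ is too. The second-order bound gives $\|\tilde U - e^{-\beta H}\| = O(\beta\tau^2 C)$ with $C = O(L^3 J^3)$ bounding the nested commutators $\|[H_a,[H_b,H_c]]\|$. Because $\|e^{-\beta H}|+\rangle\| \ge \eta$, normalisation converts an operator-norm error $\epsilon_T$ into a distributional error $O(\epsilon_T/\eta)$; setting this to $\epsilon/2$ forces $\tau = \Theta((\eta\epsilon/(\beta C))^{1/2})$ and a total of $d := \Theta(LM) = \Theta(\beta^{3/2}L(C/(\eta\epsilon))^{1/2})$ local-factor applications. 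Substituting $\beta$ and $C$ yields $d = O(\eta^{-1/2}L^{5/2}J^{3/2}\Delta^{-3/2}\epsilon^{-1/2}(\log(\eta^{-1}\epsilon^{-1}))^{3/2})$, which matches the claimed hyperedge count. It is precisely the second order of the formula (error $\sim \beta\tau^2$ rather than $\beta\tau$) that turns the naive $\Delta^{-2},\epsilon^{-1}$ scaling into the stated $\Delta^{-3/2},\epsilon^{-1/2}$, and the factor $\|e^{-\beta H}|+\rangle\|^{-1}\asymp\eta^{-1}$ appearing under the square root that produces $\eta^{-1/2}$.

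To build the HBM, expand $\langle x|\tilde U|+\rangle$ by inserting a computational-basis resolution between consecutive factors, giving $\langle x|\tilde U|+\rangle = 2^{-n/2}\sum_{\mathrm{paths}}\prod_j \langle y_{j+1}|e^{-(\tau/2)H_{a(j)}}|y_j\rangle$, a sum over non-negative worldlines ending at $x$. Since the target is $f(x)=|\langle x|\tilde U|+\rangle|^2$, I use two independent copies of this worldline sharing the endpoint $x$, and set each local energy term to $F_e = -\log\langle y_{j+1}|e^{-(\tau/2)H_{a(j)}}|y_j\rangle$; then $f(x)=\sum_h \exp(-F(x,h))$ with $h$ the intermediate worldline bits. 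Because $e^{-(\tau/2)H_a}$ is the identity off the $k$-qubit support of $H_a$, only those $\le k$ qubits are updated per application: each factor is a hyperedge on the $\le k$ input and $\le k$ output bits, so the HBM is $2k$-local; each application contributes $\le k$ new hidden bits, giving $O(kd)$ hidden nodes; and each bit, being the output of one application and the input of the next, lies in at most $2$ hyperedges (the shared $x_i$ is the output of the last application touching qubit $i$ in each copy, hence also in at most $2$). A triangle inequality then combines the two $\epsilon/2$ contributions into total variation $\le\epsilon$ between $f(x)/\sum_y f(y)$ and $|\langle x|\psi_0\rangle|^2$.

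I expect the main obstacle to be the estimate in the second paragraph: establishing the symmetric product-formula bound with the correct nested-commutator constant $C=O(L^3J^3)$ for the \emph{non-unitary} propagator, and carefully tracking how the small norm $\|e^{-\beta H}|+\rangle\|\asymp\eta$ converts operator-norm error into distributional error, since this is what pins down every half-integer exponent including $\eta^{-1/2}$. A secondary technical nuisance is that some factors $\langle y_{j+1}|e^{-(\tau/2)H_a}|y_j\rangle$ may vanish, making $F_e=+\infty$; these are forbidden transitions, handled either by omitting the corresponding configurations or by substituting a large finite energy, consistent with the ``arbitrary precision'' convention of the paper.
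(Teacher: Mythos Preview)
Your proposal is correct and follows essentially the same approach as the paper: imaginary-time projection onto the ground state, second-order (symmetric) Suzuki--Trotter decomposition yielding the $O(\delta^3 L^3 J^3)$ local error that drives the $\eta^{-1/2},\Delta^{-3/2},\epsilon^{-1/2}$ exponents, and a path-sum reading of the product of local non-negative matrices as an HBM, squared to pass from wavefunction to distribution. The only cosmetic differences are that the paper perturbs each factor by $\alpha h_j$ (the all-ones $k$-local matrix) to force strictly positive entries rather than truncating $-\log 0$ to a large finite energy, and it separates the squaring into its own lemma (duplicating hidden nodes) rather than building two worldline copies from the outset; both lead to the same HBM and the same resource counts.
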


\begin{theorem}\label{SFF_HBM}
Let $H$ be a $k$-local SFF Hamiltonian on $n$ qubits, with $L = \poly(n)$ terms, energy gap $\Delta = 1/\poly(n)$, with each term having operator norm bounded by $J = \poly(n)$. Let $\Pi$ project onto the ground subspace of $H$. The ground state $|\psi_0 \rangle = \frac{1}{\eta} \Pi |+ \rangle$ can be represented to precision $\epsilon$ in distribution by a $2k$-local HBM with $O(k L^3 J \Delta^{-1} \log(\eta^{-1} \epsilon^{-1}))$ hidden nodes, $O(L^3 J \Delta^{-1} \log(\eta^{-1} \epsilon^{-1}))$ hyperedges, where each node is contained in at most 2 hyperedges. Note $\log(\eta^{-1}) = \poly(n)$ (see Lemma \ref{SFF_P}).
\end{theorem}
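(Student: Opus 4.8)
The plan is to approximate the unnormalised ground state $\Pi|+\rangle = \eta|\psi_0\rangle$ by evolving $|+\rangle$ under an ordered product of \emph{local, entrywise-nonnegative} operators, and then to read this product off as a HBM whose hidden variables record the intermediate configurations. Frustration-freeness is exactly what lets the product be \emph{exact} (no Trotter/product-formula error), which is why the precision enters only as $\log(\epsilon^{-1})$ here, in contrast to the $\poly(\epsilon^{-1})$ of Theorem~\ref{stoq_HBM}.

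First I would shift each term so that $H_a \succeq 0$ with least eigenvalue $0$; frustration-freeness then identifies the ground subspace with $\bigcap_a \ker H_a$, so $H_a \Pi = 0$ for all $a$ and $\|H_a\| \le 2J$. Put $A_a = I - H_a/(2J)$. Since $H_a$ is stoquastic and PSD, $A_a$ is entrywise nonnegative, satisfies $0 \preceq A_a \preceq I$, is $k$-local, and fixes the ground subspace ($A_a\Pi = \Pi$). Let $M = A_L\cdots A_1$ (or its palindrome, to make $M$ Hermitian). Then $M$ is entrywise nonnegative, $\|M\|\le 1$, and $M\Pi = \Pi$, so applying $M^r$ to $|+\rangle$ preserves the ground component $\eta|\psi_0\rangle$ while damping the excited component by $\|M\Pi^\perp\|^r$.

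The crux is a polynomial lower bound on the gap of the product, of the form $\|M\Pi^\perp\| \le 1 - \Omega(\Delta/(JL^2))$. Writing $h_a = I - A_a = H_a/(2J)$ we have $0\preceq h_a\preceq I$, $\sum_a h_a = H/(2J)$ and $\sum_a h_a \succeq (\Delta/2J)\Pi^\perp$; the difficulty is converting this bound on the \emph{sum} into a contraction bound on the \emph{ordered product}. Telescoping $\||v_t\rangle\|^2 \le \||v_{t-1}\rangle\|^2 - \langle v_{t-1}|h_t|v_{t-1}\rangle$ along the partial products $|v_t\rangle = A_t\cdots A_1|v\rangle$ reduces this to comparing the local energies at the intermediate states with those at $|v\rangle$; doing so without losing a factor of $\Delta$ (a naive Cauchy--Schwarz bound degrades $\Delta$ to $\Delta^2$) is the main obstacle, and is precisely what a detectability-lemma argument for frustration-free systems provides. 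Granting the bound, taking $r = O(JL^2\Delta^{-1}\log(\eta^{-1}\epsilon^{-1}))$ makes the normalised $M^r|+\rangle$ $\epsilon$-close to $|\psi_0\rangle$ in wavefunction; note that $\eta$ enters only inside the logarithm, and $\log(\eta^{-1}) = \poly(n)$ by Lemma~\ref{SFF_P}, so no assumption on $\eta$ is needed.

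Finally I would encode $\langle x|M^r|+\rangle$ as a HBM: inserting resolutions of the identity between consecutive factors gives $\langle x|M^r|+\rangle = 2^{-n/2}\sum_{\text{paths}}\prod_t \langle z_t|A_{a(t)}|z_{t-1}\rangle$, where the intermediate strings and the initial string from $|+\rangle$ are hidden nodes, $x$ is the visible readout, and each $k$-local factor contributes a local energy $-\log\langle z_t|A_{a(t)}|z_{t-1}\rangle$ coupling at most $2k$ nodes. Using a world-line representation — a fresh node for a qubit only when it is acted on, its value reused otherwise — each node sits in at most two hyperedges, so the HBM is $2k$-local; the $rL$ factors give $O(L^3 J\Delta^{-1}\log(\eta^{-1}\epsilon^{-1}))$ hyperedges and $O(k\,rL)$ hidden nodes, as claimed. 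To obtain a \emph{distribution} rather than an amplitude I would take two independent copies of this network sharing the single visible layer $x$, so that the HBM output is proportional to $\langle x|M^r|+\rangle^2$; its induced distribution is then $\langle x|M^r|+\rangle^2/\sum_y \langle y|M^r|+\rangle^2$, which the wavefunction bound converts to the stated total-variation bound via the standard inequality between the $\ell_2$ distance of normalised vectors and the total-variation distance of the distributions they induce.
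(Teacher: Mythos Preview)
Your overall architecture is exactly the paper's: Lemma~\ref{SFF_P} (detectability-lemma product of local nonnegative operators approximating $\Pi$), then Lemma~\ref{P_HBM} (path-sum/world-line HBM for the wavefunction), then Lemma~\ref{squareHBM} (two copies sharing the visible layer to pass from wavefunction to distribution). The counting and the $\log(\eta^{-1}\epsilon^{-1})$ dependence match.

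The one substantive difference is your choice of local operator. You take $A_a = I - H_a/(2J)$; the paper instead uses the local ground-space \emph{projectors} $\Pi_a$, perturbed by $\alpha h_a$. This matters in two places. First, the Detectability Lemma as cited (Aharonov et al., Anshu--Arad--Vidick) is stated for products of projectors; your $A_a$ are not projectors, and the contraction bound $\|A_L\cdots A_1\,\Pi^\perp\|\le 1-\Omega(\Delta/(JL^2))$ that you explicitly ``grant'' does not follow from the standard statement. You correctly identify this as the crux and correctly diagnose that a naive telescoping loses a factor of $\Delta$, but you do not actually supply the argument---you only assert that ``a detectability-lemma argument provides'' it. Second, your $A_a$ may have zero matrix entries (whenever $(H_a)_{xy}=0$ for $x\ne y$), so the hyperedge energy $-\log\langle z_t|A_a|z_{t-1}\rangle$ can be $+\infty$; the paper's $\alpha h_a$ perturbation is there precisely to make all entries strictly positive so the logarithm is well-defined.

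Both issues vanish if you replace $A_a$ by $\Pi_a+\alpha h_a$ as in the paper: the standard Detectability Lemma then applies verbatim, all entries are $\ge\alpha>0$, the extra $O(N\alpha L)$ error is absorbed by choosing $\alpha=O(\eta T^{-1}\epsilon)$, and the rest of your write-up (world-line HBM with each node in at most two hyperedges, two-copy squaring) goes through unchanged and coincides with Lemmas~\ref{P_HBM} and~\ref{squareHBM}.
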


In the rest of this section, we will prove these theorems via a sequence of lemmas. Our strategy will be as follows: we will first construct a sequence of $k$-local entrywise positive matrices $P_1, \dots, P_T$ which act to converge the state $|+\rangle$ to a ground state of the given Hamiltonian. We do this in the stoquastic and SFF cases separately, in Lemmas \ref{stoq_P} and \ref{SFF_P} respectively. In the SFF case, the sequence will project onto the ground subspace of each term separately. In the stoquastic case, we must use a Trotter decomposition for the imaginary time evolution operator. We then use this sequence to find a HBM with output $f$ which represents the ground state in wavefunction, which is done in Lemma \ref{P_HBM}. The idea in Lemma \ref{P_HBM} is to represent the action of a $k$-local entrywise positive matrix $P$ by adding some new hidden nodes, and a new hyperedge. We then must convert this to a HBM with output $f^2$ which represents the ground state in distribution, which is achieved by `squaring' the HBM. This is done in Lemma \ref{squareHBM}.

\begin{lemma}\label{stoq_P}
Let $H$ be a $k$-local stoquastic Hamiltonian on $n$ qubits, with $L = \poly(n)$ terms, energy gap $\Delta = 1/\poly(n)$, with each term having operator norm bounded by $J = \poly(n)$. Let $\Pi$ project onto the ground subspace of $H$, and assume $\eta = \sqrt{\langle +| \Pi |+ \rangle} \geq 1/\poly(n)$. We can find matrices $P_1, \dots, P_T$ (not necessarily unitary) satisfying
\begin{itemize}
	\item $P_i$ is $k$-local.
	\item The entries of $P_i$ as a $k \times k$ matrix are real and positive.
\end{itemize}
such that the state:
$$ |\psi \rangle = \frac{1}{\sqrt{Z}} P_T \dots P_1 |+ \rangle $$
is within $\left| \left| |\psi \rangle - |\psi_0 \rangle \right| \right|_2 < \epsilon$ of the ground state $|\psi_0 \rangle = \frac{1}{\eta} \Pi |+\rangle$. ($Z$ is a normalization constant.) We have $T = O(\eta^{-1/2} L^{5/2} J^{3/2} \Delta^{-3/2} \epsilon^{-1/2} (\log(\eta^{-1} \epsilon^{-1}))^{3/2})$.
\end{lemma}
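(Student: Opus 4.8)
The plan is to realise the ground state by imaginary-time evolution, $|\psi_0\rangle \propto \lim_{\beta\to\infty} e^{-\beta H}|+\rangle$, and to discretise this evolution into a product of $k$-local entrywise-positive factors via a second-order product formula, balancing a truncation error (from taking $\beta$ finite) against a discretisation error (from the product formula). For the first stage, write $\tilde H = H - E_0 I$ ($E_0$ the ground energy, $I$ the identity), so that $\tilde H \succeq 0$ and the first excited energy is $\Delta$. Decomposing $|+\rangle = \Pi|+\rangle + |+\rangle_\perp$ into its ground and excited components gives $\lVert\Pi|+\rangle\rVert = \eta$ and $\lVert e^{-\beta\tilde H}|+\rangle_\perp\rVert \le e^{-\beta\Delta}$. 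The standard estimate for normalising a vector $a+b$ with $a\perp b$ then shows that the normalised $e^{-\beta\tilde H}|+\rangle$ lies within $O(e^{-\beta\Delta}/\eta)$ of $\tfrac1\eta\Pi|+\rangle = |\psi_0\rangle$, so choosing $\beta = O(\Delta^{-1}\log(\eta^{-1}\epsilon^{-1}))$ makes this contribution at most $\epsilon/2$.

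For the second stage, set $\tau = \beta/r$ and define the symmetric (second-order) step $S(\tau) = \bigl(\prod_{a=1}^{L} e^{-\tau H_a/2}\bigr)\bigl(\prod_{a=L}^{1} e^{-\tau H_a/2}\bigr)$, so that $S(\tau)^r$ is a product of $T = 2Lr$ factors, each of which I take as one of the $P_i$. Each factor acts only on the $\le k$ qubits of its term $H_a$, hence is $k$-local. The crucial structural input is positivity: since $H$ is stoquastic, $-H_a$ has non-negative off-diagonal entries (a Metzler matrix), and the exponential of a Metzler matrix is entrywise non-negative, so every $P_i$ has non-negative entries (strict positivity can be arranged by a vanishing shift, and non-negativity is all the downstream HBM construction requires). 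The overall positive scalar arising from the shift is absorbed into the normalisation $Z$.

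For the third stage I would bound the discretisation error by the telescoping identity $e^{-\beta\tilde H} - \tilde S(\tau)^r = \sum_{j} e^{-(r-1-j)\tau\tilde H}\,\bigl(e^{-\tau\tilde H} - \tilde S(\tau)\bigr)\,\tilde S(\tau)^{j}$, where $\tilde S$ uses the shifted terms $\tilde H_a = H_a - E_0/L$. Because $\tilde H\succeq0$ we have $\lVert e^{-\tau\tilde H}\rVert \le 1$, while the single-step error of a second-order formula is $\lVert e^{-\tau\tilde H} - \tilde S(\tau)\rVert = O(\tau^3\Lambda)$ with $\Lambda = \sum_{a,b,c}\lVert[H_a,[H_b,H_c]]\rVert = O(L^3 J^3)$, valid once $\tau L J = O(1)$; this also forces $\lVert\tilde S(\tau)\rVert \le 1 + O(\tau^3\Lambda)$, so that $\lVert\tilde S(\tau)\rVert^{r} = O(1)$ at the final parameters. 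Hence the total operator-norm error is $O(r\tau^3\Lambda) = O(\beta^3 L^3 J^3/r^2)$. Dividing by the norm $\approx\eta$ of the target vector to pass to the normalised state and requiring this to be $\le \epsilon/2$ gives $r = O(\eta^{-1/2}\epsilon^{-1/2}L^{3/2}J^{3/2}\beta^{3/2})$; substituting $\beta = O(\Delta^{-1}\log(\eta^{-1}\epsilon^{-1}))$ yields $r = O(\eta^{-1/2}L^{3/2}J^{3/2}\Delta^{-3/2}\epsilon^{-1/2}(\log(\eta^{-1}\epsilon^{-1}))^{3/2})$, and therefore $T = 2Lr$ of the stated order, with the two $\epsilon/2$ contributions summing to $\epsilon$.

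The main obstacle is the third stage: unlike the unitary case, the imaginary-time factors are not contractions, so I must simultaneously (i) obtain the sharp $O(\tau^3 L^3 J^3)$ per-step bound for the non-Hermitian symmetric product formula and (ii) certify that the accumulated prefactors $\lVert\tilde S(\tau)\rVert^{r}$ do not blow up. It is precisely this $O(\tau^3)$ per-step scaling of a \emph{second-order} formula (rather than the $O(\tau^2)$ of a first-order one) that produces the half-integer exponents in $T$, so the choice of a symmetric formula is essential. Tracking the $1/\eta$ loss incurred when converting the operator-norm bound into a bound on the normalised state is the other point requiring care.
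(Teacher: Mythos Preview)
Your proposal is correct and follows essentially the same route as the paper: imaginary-time evolution of $|+\rangle$, a second-order symmetric Trotter decomposition into $2Lr$ many $k$-local factors, and the same balancing of the $O(\eta^{-1}e^{-\beta\Delta})$ truncation error against the $O(\eta^{-1}r\tau^3 L^3 J^3)$ discretisation error to arrive at the stated $T$. The only substantive difference is that the paper makes the strict-positivity step explicit---it adds $\alpha h_j$ (the all-ones $k$-local matrix) to each factor $e^{-\delta H_j/2}$ and carries the resulting $O(N\alpha L)$ term through the error budget---and you should not dismiss this as optional: the lemma statement asks for strictly positive entries, and the downstream HBM construction (Lemma~\ref{P_HBM}) takes logarithms of those entries, so mere non-negativity does not suffice.
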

\begin{proof}
Assume the ground energy of $H$ is zero. This is without loss of generality, since we can add scalar multiples of the identity. The imaginary time evolution operator $e^{-\tau H}$ has operator norm 1. If we apply $e^{-\tau H}$ to $|+\rangle$, we have:
$$ \left| \left| \frac{1}{\eta} e^{-\tau H} |+\rangle - |\psi_0 \rangle \right| \right|_2 = O(\eta^{-1} e^{-\tau \Delta}) $$

To see this, write $|+\rangle = \eta |\psi_0 \rangle + \sqrt{1 - \eta^2} |\psi_0 \rangle^\perp$, where $|\psi_0 \rangle^\perp$ is some unit vector orthogonal to $|\psi_0 \rangle$. Then $\left| \left| \frac{1}{\eta} e^{-\tau H} |+\rangle - |\psi_0 \rangle \right| \right|_2 = \left| \left| \frac{\sqrt{1-\eta^2}}{\eta} e^{-\tau H} |\psi_0 \rangle^\perp \right| \right|_2 \leq \frac{\sqrt{1-\eta^2}}{\eta} e^{-\tau \Delta} \leq \eta^{-1} e^{-\tau \Delta}$

Consider the second order Suzuki-Trotter decomposition $e^{- \delta H} \approx e^{- \frac{1}{2} \delta H_L} \dots e^{- \frac{1}{2} \delta H_1} e^{- \frac{1}{2} \delta H_1} \dots e^{- \frac{1}{2} \delta H_L}$ for small $\delta = 1/\poly(n)$. 
The error is:
\begin{align*}
\left| \left| e^{- \frac{1}{2} \delta H_L} \dots e^{- \frac{1}{2} \delta H_1} e^{- \frac{1}{2} \delta H_1} \dots e^{- \frac{1}{2} \delta H_L} - e^{-\delta H} \right| \right|_{\text{op}} \\ = O(\delta^3 L^3 J^3) 
\end{align*}
For a proof of this, see Appendix \ref{Trotter}.

Let $h_j$ be the $k\times k$ matrix consisting of a $1$ in each entry, acting on the same qubits as $H_j$. Let $\alpha$ be small.

\begin{align*} 
&\left( e^{- \frac{1}{2} \delta H_L} + \alpha h_L \right) \dots \left( e^{- \frac{1}{2} \delta H_1} + \alpha h_1 \right) \times \\
&\times \left( e^{- \frac{1}{2} \delta H_1} + \alpha h_1 \right) \dots \left( e^{- \frac{1}{2} \delta H_L} + \alpha h_L  \right) \\
&= e^{- \frac{1}{2} \delta H_L} \dots e^{- \frac{1}{2} \delta H_1} e^{- \frac{1}{2} \delta H_1} \dots e^{- \frac{1}{2} \delta H_L} + O(\alpha L)\\
&= e^{-\delta H} + O(\delta^3 L^3 J^3) + O(\alpha L)
\end{align*}

\begin{align*} 
& \Big[ \left( e^{- \frac{1}{2} \delta H_L} + \alpha h_L \right) \dots \left( e^{- \frac{1}{2} \delta H_1} + \alpha h_1 \right) \times \\ 
&\times \left( e^{- \frac{1}{2} \delta H_1} + \alpha h_1 \right) \dots \left( e^{- \frac{1}{2} \delta H_L} + \alpha h_L \right) \Big]^N \\ 
&= e^{-N \delta H} + O(N \delta^3 L^3 J^3) + O(N \alpha L) 
\end{align*}

Let 
\begin{align*}
P_T &\dots P_1 = \\ & \Big[ \left( e^{- \frac{1}{2} \delta H_L} + \alpha h_L \right) \dots \left( e^{- \frac{1}{2} \delta H_1} + \alpha h_1 \right) \times \\ &\times \left( e^{- \frac{1}{2} \delta H_1} + \alpha h_1 \right) \dots \left( e^{- \frac{1}{2} \delta H_L} + \alpha h_L \right) \Big]^N.
\end{align*}

 So $P_i = e^{- \frac{1}{2} \delta H_j} + \alpha h_j$ for some $j$. Note the entries of $e^{- \frac{1}{2} \delta H_j}$ are non-negative by stoquasticity of $H_j$, so the entries of $P_i$ are positive (bigger than $\alpha$). Note also $T = 2NL$. We have:
$$ \left| \left| P_T \dots P_1 - e^{- N \delta H} \right| \right|_{\text{op}} = O(N \delta^3 L^3 J^3) + O(N \alpha L) $$

Thus we have:
\begin{align*}
&\left| \left| \frac{1}{\eta} P_T \dots P_1 |+\rangle - |\psi_0 \rangle \right| \right|_2 \leq \\ & \left| \left| \frac{1}{\eta} P_T \dots P_1 |+\rangle - \frac{1}{\eta} e^{-N \delta H} |+\rangle \right| \right|_2 + \left| \left| \frac{1}{\eta} e^{-N \delta H} |+\rangle - |\psi_0 \rangle \right| \right|_2 \\ 
&= O(\eta^{-1} N \delta^3 L^3 J^3) + O(\eta^{-1} N \alpha L) + O(\eta^{-1} e^{-N \delta \Delta})
\end{align*}

By Lemma \ref{norms}:
\begin{align*}
&\left| \left| \frac{1}{\sqrt{Z}} P_T \dots P_1 |+\rangle - |\psi_0 \rangle \right| \right|_2 = \\ &O(\eta^{-1} N \delta^3 L^3 J^3) + O(\eta^{-1} N \alpha L) + O(\eta^{-1} e^{-N \delta \Delta}) 
\end{align*}

Thus if we want the total error to be bounded by $\epsilon$, we must take:

$$ \delta = O\big(\eta^{1/2} L^{-3/2} J^{-3/2} \Delta^{1/2} \epsilon^{1/2} (\log (\eta^{-1} \epsilon^{-1}))^{-1/2}\big) $$
$$ N = O\big(\eta^{-1/2} L^{3/2} J^{3/2} \Delta^{-3/2} \epsilon^{-1/2} (\log (\eta^{-1} \epsilon^{-1}))^{3/2}\big) $$
$$ \alpha = O\big( \eta N^{-1} L^{-1} \epsilon \big) = O\big( \eta T^{-1} \epsilon \big)$$

The number of local terms is:

$$ T = 2NL = O\big( \eta^{-1/2} L^{5/2} J^{3/2} \Delta^{-3/2} \epsilon^{-1/2} (\log(\eta^{-1} \epsilon^{-1}))^{3/2}\big) $$
\end{proof}

\begin{lemma}\label{SFF_P}
Let $H$ be a $k$-local SFF Hamiltonian on $n$ qubits, with $L = \poly(n)$ terms, energy gap $\Delta = 1/\poly(n)$, with each term having operator norm bounded by $J = \poly(n)$. Let $\Pi$ project onto the ground subspace of $H$. We can find matrices $P_1, \dots, P_T$ (not necessarily unitary) satisfying
\begin{itemize}
	\item $P_i$ is $k$-local.
	\item The entries of $P_i$ as a $k \times k$ matrix are real and positive.
\end{itemize}
such that the state:
$$ |\psi \rangle = \frac{1}{\sqrt{Z}} P_T \dots P_1 |+ \rangle $$
is within $|| |\psi \rangle - |\psi_0 \rangle ||_2 < \epsilon$ of the ground state $|\psi_0 \rangle = \frac{1}{\eta} \Pi |+\rangle$. ($Z$ is a normalization constant.) We have $T=O(L J \Delta^{-1} \log(\eta^{-1} \epsilon^{-1}))$. Note $\log(\eta^{-1}) = \poly(n)$ (see proof).
\end{lemma}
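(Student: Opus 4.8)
The plan is to follow the skeleton of Lemma \ref{stoq_P} but to exploit frustration-freeness so that the Trotterised imaginary-time evolution is replaced by \emph{exact} projections onto the ground subspaces of the individual terms; this removes the Trotter error and is the source of the improved bound. First I would shift each term by a multiple of the identity so that $H_a \succeq 0$ for all $a$ and the global ground energy is $0$; frustration-freeness then identifies the ground subspace of $H$ with $\bigcap_a \ker H_a$. Writing $\Pi_a$ for the orthogonal projector onto $\ker H_a$, the structural fact I need is that each $\Pi_a$ is entrywise nonnegative: stoquasticity makes $e^{-\tau H_a}$ entrywise nonnegative for every $\tau \geq 0$, and $\Pi_a = \lim_{\tau \to \infty} e^{-\tau H_a}$, so the limit inherits this. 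I would then set $P_1, \dots, P_T$ to cycle through $\Pi_1 + \alpha h_1, \dots, \Pi_L + \alpha h_L$ over $N$ sweeps (so $T = NL$), where the small perturbation $\alpha > 0$ serves only to lift the nonnegative entries to strictly positive ones; each $P_i$ is then $k$-local with positive entries, as required.

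The heart of the argument is a quantitative contraction rate for one sweep $G = \Pi_L \cdots \Pi_1$. Each $\Pi_a$ fixes the ground subspace, so $G$ acts as the identity there and I must show it contracts the orthogonal complement geometrically. To read the rate off the gap $\Delta$, I would pass to the projector Hamiltonian $H' = \sum_a \bar\Pi_a$, with $\bar\Pi_a := I - \Pi_a$, which has the same ground subspace; since $\Pi_a$ and $H_a$ commute and $0 \preceq H_a \preceq J\,\bar\Pi_a$, we get $H' \succeq \tfrac1J H \succeq \tfrac{\Delta}{J}(I - \Pi)$, so $H'$ is frustration-free with gap $\Delta' \geq \Delta/J$. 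For a unit $|\phi\rangle$ orthogonal to the ground subspace, a telescoping identity with $Q_j := \Pi_j \cdots \Pi_1$ gives $1 - \|G\phi\|^2 = \sum_{j=1}^{L} \|\bar\Pi_j Q_{j-1}\phi\|^2$, and bounding $\|\bar\Pi_j \phi\|$ against these increments (triangle inequality plus Cauchy--Schwarz) together with $\sum_j \|\bar\Pi_j \phi\|^2 = \langle \phi | H' | \phi \rangle \geq \Delta'$ yields a per-sweep contraction $\|G\phi\| \leq \sqrt{1 - c\,\Delta'/L^2}$. I expect this convergence rate to be the main obstacle: the elementary estimate just sketched loses two powers of $L$, and removing them to reach the clean rate $\sim \Delta/J$ in the statement requires the sharper detectability lemma under an assumption of bounded local overlap.

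Given the per-sweep contraction, I would decompose $|+\rangle = \eta|\psi_0\rangle + \sqrt{1-\eta^2}\,|\psi_0\rangle^\perp$; the ground component is preserved at norm $\eta$ while the excited component shrinks to at most $(1 - c\Delta')^{N}$ after $N$ sweeps, so choosing $N = O\big(\tfrac{J}{\Delta}\log(\eta^{-1}\epsilon^{-1})\big)$ brings the renormalised state within $\epsilon$ of $|\psi_0\rangle$, and Lemma \ref{norms} transfers this from the $1/\eta$ normalisation to the $1/\sqrt{Z}$ normalisation. The perturbation accumulates an error $O(\alpha T/\eta)$ over the $T$ steps, so taking $\alpha = O(\eta\epsilon/T)$ is harmless, leaving $T = NL = O(LJ\Delta^{-1}\log(\eta^{-1}\epsilon^{-1}))$ as claimed (with an extra factor $L^2$ if one relies only on the elementary contraction above, matching the exponents appearing downstream in Theorem \ref{SFF_HBM}).

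Finally, to justify $\log(\eta^{-1}) = \poly(n)$, I would use that $\Pi = \lim_{\tau\to\infty} e^{-\tau H}$ is itself entrywise nonnegative, so $\eta^2 = \langle +|\Pi|+ \rangle = 2^{-n}\sum_{x,y}\langle x|\Pi|y\rangle \geq 2^{-n}\sum_x \langle x|\Pi|x\rangle = 2^{-n}\operatorname{Tr}\Pi \geq 2^{-n}$; hence $\log(\eta^{-1}) \leq \tfrac{n}{2}\log 2 = O(n)$, which is precisely what lets us dispense with the overlap assumption that was needed in the stoquastic case.
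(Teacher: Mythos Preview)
Your proposal is correct and follows essentially the same route as the paper: cycle through the local ground-space projectors $\Pi_j$ (perturbed by $\alpha h_j$ for strict positivity), invoke a detectability-lemma-type contraction for one sweep, decompose $|+\rangle$ into ground and excited parts, and finish with Lemma~\ref{norms}. The paper simply cites the Detectability Lemma as a black box (obtaining the same $L^{-2}$ rate your elementary telescoping argument gives), and you have correctly spotted that the resulting $T = O(L^3 J \Delta^{-1}\log(\eta^{-1}\epsilon^{-1}))$ matches the paper's own proof and the downstream Theorem~\ref{SFF_HBM} rather than the $O(LJ\Delta^{-1}\cdots)$ appearing in the lemma statement.
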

\begin{proof}
First recall that a ground state of a stoquastic Hamiltonian has without loss of generality real positive amplitudes in the computational basis. This implies the overlap $\eta = \sqrt{\langle +| \Pi |+ \rangle} \geq 2^{-n/2}$. Let $\Pi_j$ project onto the ground subspace of $H_j$. Applying the Detectability Lemma from~\cite{anshu2016simple,aharonov2009detectability}, we have that
$$ \left| \left| \Pi - \left[ \Pi_L \dots \Pi_1 \right]^N \right| \right|_{\text{op}} = O(e^{-\frac{1}{4} N \Delta J^{-1} L^{-2}}) $$

Let $h_j$ be the $k\times k$ matrix consisting of a $1$ in each entry, acting on the same qubits as $H_j$. Let $\alpha$ be small.
\begin{align*}
&\left| \left| \Pi - \left[ (\Pi_L + \alpha h_L) \dots (\Pi_1 + \alpha h_1) \right]^N \right| \right|_{\text{op}} \\ &=  O(N \alpha L) + O(e^{-\frac{1}{4} N \Delta J^{-1} L^{-2}})
\end{align*}

Let
$$ P_T \dots P_1 = \left[ (\Pi_L + \alpha h_L) \dots (\Pi_1 + \alpha h_1) \right]^N $$

So $P_i = \Pi_j + \alpha h_j$ for some $j$. Note the entries of $\Pi_j$ are non-negative by stoquasticity of $H_j$, and so the entries of $P_j$ are positive (bigger than $\alpha$). Note also $T=NL$. We have:
$$ \left| \left| P_T \dots P_1 - \Pi \right| \right|_{\text{op}} = O(N \alpha L) + O(e^{-\frac{1}{4} N \Delta J^{-1} L^{-2}}) $$

Thus:
\begin{align*}
&\left| \left| \frac{1}{\eta} P_T \dots P_1 |+\rangle - |\psi_0 \rangle \right| \right|_2 \\&= \left| \left| \frac{1}{\eta} (P_L \dots P_1 - \Pi) |+\rangle \right| \right|_2 \\&= O(\eta^{-1} N \alpha L) + O(\eta^{-1} e^{-\frac{1}{4} N \Delta J^{-1} L^{-2}})
\end{align*}

By Lemma \ref{norms}:
\begin{align*}
&\left| \left| \frac{1}{\sqrt{Z}} P_L \dots P_1 |+\rangle - |\psi_0 \rangle \right| \right|_2 \\ &= O(\eta^{-1} N \alpha L) + O(\eta^{-1} e^{-\frac{1}{4} N \Delta J^{-1} L^{-2}})
\end{align*}

To have the total error be bounded by $\epsilon$, we take:
$$ N = O(L^2 J \Delta^{-1} \log(\eta^{-1} \epsilon^{-1})) $$
$$ \alpha = O(\eta N^{-1} L^{-1} \epsilon) = O(\eta T^{-1} \epsilon) $$

The number of local terms is:
$$ T = NL = O(L^3 J \Delta^{-1} \log(\eta^{-1} \epsilon^{-1})) $$
\end{proof}

\begin{lemma}\label{P_HBM}
Let $P_1, \dots, P_T$ be matrices (not necessarily unitary) satisfying
\begin{itemize}
	\item $P_i$ is $k$-local.
	\item The entries of $P_i$ as a $k \times k$ matrix are real and positive.
\end{itemize}
The state
$$ |\psi \rangle = \frac{1}{\sqrt{Z}} P_T \dots P_1 |+ \rangle $$
can be represented exactly in wavefunction by a $2k$-local HBM with $kT$ hidden nodes, $T$ hyperedges, each node is contained in at most 2 hyperedges.
\end{lemma}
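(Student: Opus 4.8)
The plan is to write the unnormalised amplitude $\langle x | P_T \cdots P_1 |+\rangle$ as a sum over intermediate computational-basis configurations and then realise that sum as the hidden-variable sum of an HBM. Inserting resolutions of the identity between the $P_i$ and using $\langle x_0 | + \rangle = 2^{-n/2}$, we obtain
$$ \langle x | P_T \cdots P_1 | + \rangle = \frac{1}{\sqrt{2^n}} \sum_{x_0, \dots, x_{T-1}} \prod_{i=1}^T \langle x_i | P_i | x_{i-1} \rangle, \qquad x_T := x. $$
The two structural facts I would exploit are: (i) each factor $\langle x_i | P_i | x_{i-1}\rangle$ depends only on the restrictions of $x_i$ and $x_{i-1}$ to the $k$ qubits $S_i$ on which $P_i$ acts, and (ii) it vanishes unless $x_i$ and $x_{i-1}$ agree on every qubit outside $S_i$. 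Fact (ii) means that, on the surviving terms, the value of each qubit stays constant between consecutive operators that touch it, so I never need an independent variable for an idle qubit.

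Concretely, I would lay out a worldline of nodes for each qubit $q$: letting $i_1^{(q)} < \dots < i_{m_q}^{(q)}$ be the operators whose support contains $q$, introduce nodes $b_0^{(q)}, b_1^{(q)}, \dots, b_{m_q}^{(q)}$ recording the value of qubit $q$ initially and after each such operator. The final node $b_{m_q}^{(q)}$ is declared the visible node $x_q$, while the remaining $m_q$ nodes per qubit are hidden; summing over $q$ gives $\sum_q m_q = \sum_i |S_i| = kT$ hidden nodes and $n$ visible nodes. For each operator $P_i$ I add a single hyperedge $e_i$ containing, for every $q \in S_i$, both the incoming node (the worldline node of $q$ immediately before step $i$) and the outgoing node (immediately after step $i$); this is $2|S_i| \le 2k$ distinct nodes, so the HBM is $2k$-local, and there are $T$ hyperedges in all. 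Each worldline node is the outgoing end of at most one hyperedge and the incoming end of at most one, so every node lies in at most $2$ hyperedges.

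The energy assignment is the crux: on hyperedge $e_i$ I set $F_{e_i}$ equal to $-\log$ of the local matrix element $\langle (\text{outgoing values on } S_i) | P_i | (\text{incoming values on } S_i)\rangle$, which is well defined and real precisely because the entries of each local block $P_i$ are strictly positive (this is where the positivity hypothesis is used). Then $\exp(-F(x,h)) = \prod_i \langle x_i|P_i|x_{i-1}\rangle$ on each consistent assignment, and the hidden sum reproduces the free sum over $x_0,\dots,x_{T-1}$ with the idle-qubit constraints of fact (ii) enforced automatically by node sharing, so no Kronecker-delta edges are needed; hence $f(x) = \sqrt{2^n}\,\langle x|P_T\cdots P_1|+\rangle$. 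Since this is a positive scalar multiple of the unnormalised amplitude, normalising by $\sqrt{\sum_y f(y)^2}$ yields exactly $\langle x|\psi\rangle$, giving an exact wavefunction representation. I expect the only delicate point to be convincing oneself that node sharing genuinely implements fact (ii) — that reusing one variable for a qubit across its idle intervals is equivalent to inserting the constraining deltas — together with the bookkeeping of the first and last nodes of each worldline (including qubits with $m_q = 0$), so that the counts of $kT$ hidden nodes, $T$ hyperedges, and at most $2$ hyperedges per node come out exactly as stated.
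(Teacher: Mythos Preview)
Your proposal is correct and is essentially the same construction as the paper's: the paper presents it iteratively (turn the $k$ visible nodes on which $P_i$ acts into hidden nodes, add $k$ fresh visible nodes, and add one hyperedge with energy $-\log\langle x'|P_i|h'\rangle$), whereas you unroll that iteration into a single global worldline picture, but the resulting HBM and the verification that $f(x)\propto\langle x|P_T\cdots P_1|+\rangle$ are identical. Your bookkeeping of the $kT$ hidden nodes, $T$ hyperedges, $2k$-locality, and at most $2$ hyperedges per node matches the paper's counts exactly.
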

\begin{proof}
Suppose we have a HBM representing exactly in wavefunction the state $|\psi \rangle$. We are given a $k$-local matrix $P$ with real positive entries in the computational basis. We can find a new HBM which represents exactly in wavefunction the state $|\psi'\rangle = \frac{1}{\sqrt{Z}} P |\psi \rangle$.

To see this, we will update the existing HBM in a way which represents the action of $P$. Write $x = (\tilde{x}, x')$ with $|x'| = k$, $|\tilde{x}| = n-k$, and say the matrix $P$ acts on qubits $x'$. We perform the following:
\begin{enumerate}
	\item Turn the $k$ visible nodes $x'$ into new hidden nodes $h'$, ie keeping the same energy terms.
	\item Replace the $k$ visible nodes $x'$.
	\item Add a hyperedge $e'$ on the $2k$ vertices $\{ x', h' \}$, with local energy:
		$$ F_{e'} (x', h') = - \log{ \langle x' |P| h' \rangle } $$
\end{enumerate}

The original HBM has output
$$ f(\tilde{x}, x') = \sum_h \exp(-F(\tilde{x}, x', h)) \propto \langle \tilde{x}, x' |\psi \rangle $$

The new HBM has output
\begin{align*} 
f'(\tilde{x}, x') &= \sum_{h, h'} \exp(-F_{e'} (x', h')) \exp(-F(\tilde{x}, h', h))\\&\propto \sum_{h'} \langle x' |P| h' \rangle \langle \tilde{x}, h' | \psi \rangle \\&\propto \langle \tilde{x}, x' | \psi' \rangle 
\end{align*}
as desired.

The empty HBM represents exactly the state $|+\rangle$ in wavefunction. Thus if we apply the above construction iteratively on $P_1, \dots, P_T$, starting with an empty HBM, we get the desired HBM. By examining the HBM updates, we can see that the resulting HBM will be $2k$-local with $kT$ hidden nodes, $T$ hyperedges, and each node contained in at most 2 hyperedges.
\end{proof}

\begin{figure}[h]
\centering
  \includegraphics[width=0.5\textwidth]{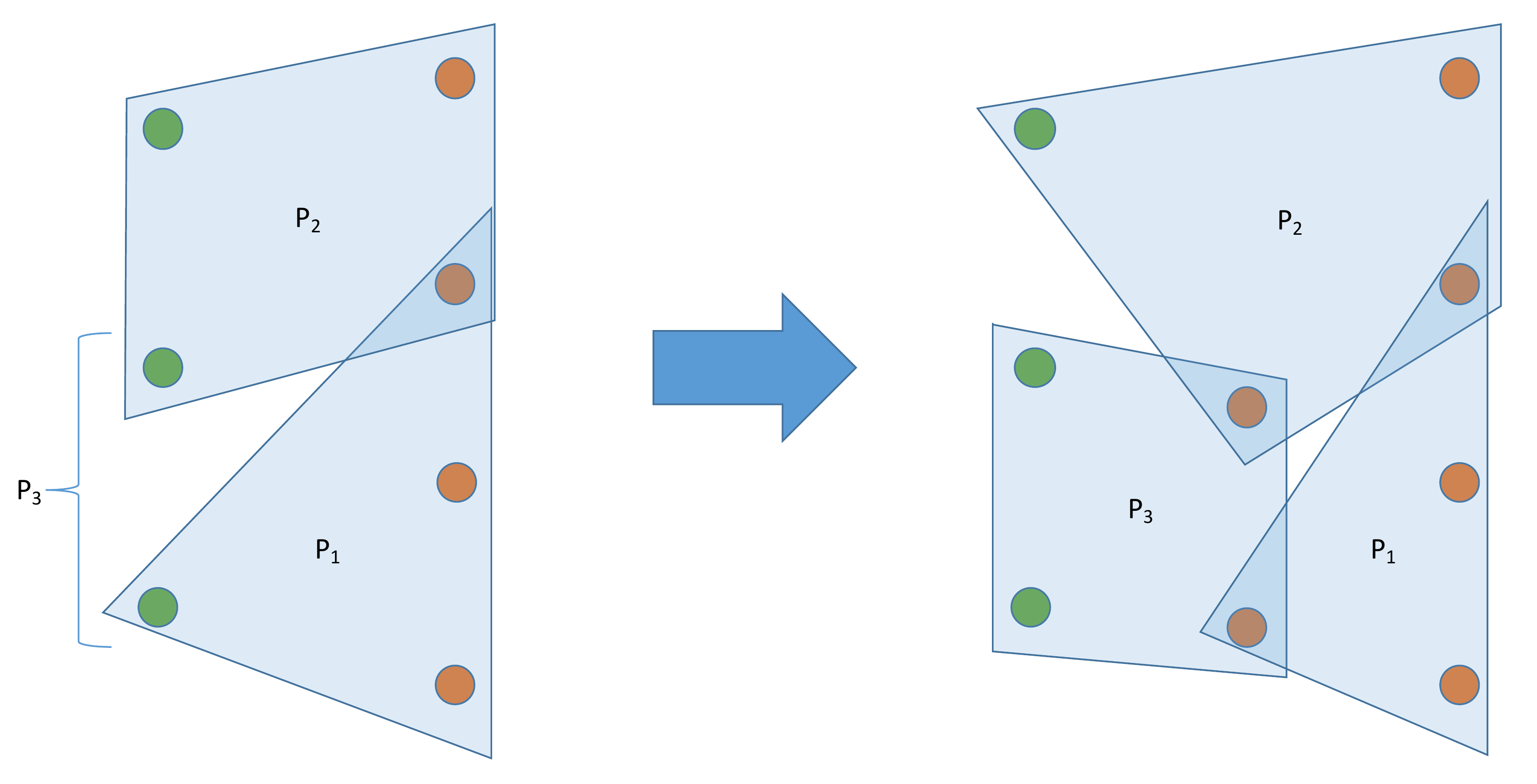}
  \caption{Consider a 3 qubit system. Say $P_1$ acts on qubits 2,3, $P_2$ acts on qubits 1,2 and $P_3$ acts again on qubits 2,3. The left HBM represents the state $\frac{1}{\sqrt{Z}} P_2 P_1 |+\rangle$, and the right HBM represents the state $\frac{1}{\sqrt{Z'}} P_3 P_2 P_1 |+\rangle$. We have added replicas of qubits 2,3 and an extra hyperedge corresponding to the operator $P_3$.}
\
\end{figure}

{\bf Remark*}: The energy terms $F_e$ in applications of Lemma \ref{P_HBM} to Lemma \ref{stoq_P} and \ref{SFF_P} are bounded by $\log (\alpha^{-1}) = \poly(n)$.

\begin{lemma}\label{squareHBM}
Given a HBM with output $f(x)$ which represents a state $|\psi \rangle$ to precision $\epsilon$ in wavefunction ie 
$$ \left| \left| \frac{f(x)}{\sqrt{\sum_y f(y)^2}} - \langle x|\psi \rangle \right| \right|_2 \leq \epsilon $$
we can find a HBM with output $f'(x)=f(x)^2$ which represents $|\psi \rangle$ to precision $\epsilon$ in distribution ie 
$$ \left| \left| \frac{f'(x)}{\sum_y f'(y)} - | \langle x|\psi \rangle |^2 \right| \right|_{\text{TV}} \leq \epsilon $$
\end{lemma}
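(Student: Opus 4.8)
The plan is to prove the lemma in two independent stages: first I would exhibit an explicit HBM whose output is the pointwise square $f'(x) = f(x)^2$, and then I would control the total variation distance via a short Cauchy--Schwarz estimate.

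For the construction, I would start from $f(x) = \sum_h \exp(-F(x,h))$ and observe that
$$ f(x)^2 = \sum_{h^{(1)}, h^{(2)}} \exp\big(-F(x, h^{(1)}) - F(x, h^{(2)})\big). $$
This is itself the output of a HBM: one keeps the visible nodes $x$, introduces two independent copies $h^{(1)}$ and $h^{(2)}$ of the hidden layer, and for each original hyperedge $e$ places one copy acting on the visible part of $e$ together with the $h^{(1)}$-copy and one acting on the visible part of $e$ together with the $h^{(2)}$-copy, each carrying the same local energy $F_e$ (a hyperedge supported only on visible nodes is simply duplicated, equivalently its energy doubled). The total energy is then $F'(x, h^{(1)}, h^{(2)}) = F(x, h^{(1)}) + F(x, h^{(2)})$, and summing over both hidden copies recovers $f(x)^2$. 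Since we only duplicate hyperedges, locality is unchanged and the hidden-node and hyperedge counts at most double; positivity of each $\exp(-F)$ guarantees the normalized square is a genuine probability distribution.

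For the approximation bound, I would set $g(x) = f(x)/\sqrt{\sum_y f(y)^2}$, so that $g \geq 0$, $\sum_x g(x)^2 = 1$, and the normalized squared output is exactly $f'(x)/\sum_y f'(y) = g(x)^2$. Writing $\psi(x) = \langle x|\psi\rangle \geq 0$, the hypothesis reads $\| g - \psi \|_2 \leq \epsilon$, and both $g^2$ and $\psi^2$ are probability distributions. I would then factor $g(x)^2 - \psi(x)^2 = (g(x) - \psi(x))(g(x) + \psi(x))$ and apply Cauchy--Schwarz,
$$ \sum_x |g(x)^2 - \psi(x)^2| \leq \| g - \psi \|_2 \, \| g + \psi \|_2 \leq \epsilon \cdot (\| g \|_2 + \| \psi \|_2) = 2\epsilon, $$
using that $g$ and $\psi$ are both unit vectors in the $2$-norm. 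Dividing by $2$ yields $\| g^2 - \psi^2 \|_{\text{TV}} = \tfrac{1}{2}\sum_x |g(x)^2 - \psi(x)^2| \leq \epsilon$, which is the claimed distribution bound.

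The analytic part is a one-line Cauchy--Schwarz, so the only point needing genuine care is the squaring construction: one must verify that the duplicated-hidden-layer HBM reproduces $f^2$ exactly, with no spurious cross terms between the two copies. This holds precisely because the two hidden copies share no hyperedges and couple only through the common visible nodes, which are held fixed inside the double sum; hence the hidden sum factorizes as $\big(\sum_{h} \exp(-F(x,h))\big)^2$. I therefore expect no serious obstacle, with the bookkeeping of the hyperedge duplication being the main item to state carefully.
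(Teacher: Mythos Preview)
Your proposal is correct and follows essentially the same argument as the paper: the paper also squares the HBM by duplicating the hidden nodes (so the energy becomes $F(x,h)+F(x,h')$ and the hidden sum factorizes), and then bounds the total variation distance by factoring $g^2-\psi^2=(g-\psi)(g+\psi)$ and applying Cauchy--Schwarz together with $\|g+\psi\|_2\le 2$. Your write-up is in fact slightly more explicit than the paper's about why no cross terms appear and why $\|g+\psi\|_2\le 2$.
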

\begin{proof}
Let $f(x)$ be the output of the original HBM. We wish to find a HBM with output $f'(x) = f(x)^2$. We do this by `squaring' the original HBM: We duplicate the hidden nodes $h \rightarrow (h, h')$, copying also the local energy terms, and connect them to the same visible nodes.
$$f(x) = \sum_h e^{-F(x,h)}$$
\begin{align*}
f'(x) &= \sum_{h, h'} e^{-F(x,h) - F(x,h')} \\&= \left( \sum_h e^{-F(x,h)} \right) \left( \sum_{h'} e^{-F(x,h')} \right) = f(x)^2 
\end{align*}

If we examine the errors, we have by assumption
$$ \left| \left| \frac{f(x)}{\sqrt{\sum_y f(y)^2}} - \langle x|\psi \rangle \right| \right|_2 \leq \epsilon $$

This implies
\begin{align*}
&\left| \left| \frac{f'(x)}{\sum_y f'(y)} - |\langle x|\psi \rangle |^2  \right| \right|_{\text{TV}} = \frac{1}{2} \sum_x \left| \frac{f(x)^2}{\sum_y f(y)^2} - |\langle x|\psi \rangle |^2 \right| \\
&= \frac{1}{2} \sum_x \left( \left| \frac{f(x)}{\sqrt{\sum_y f(y)^2}} - \langle x|\psi \rangle \right| \left( \frac{f(x)}{\sqrt{\sum_y f(y)^2}} + \langle x|\psi \rangle \right) \right) \\
&\leq \frac{1}{2} \left| \left| \frac{f(x)}{\sqrt{\sum_y f(y)^2}} - \langle x|\psi \rangle \right| \right|_2 \left| \left| \frac{f(x)}{\sqrt{\sum_y f(y)^2}} + \langle x|\psi \rangle \right| \right|_2 \leq \epsilon
\end{align*}
by the Cauchy-Schwarz inequality.
\end{proof}

\begin{figure}[h]
\centering
  \includegraphics[width=0.5\textwidth]{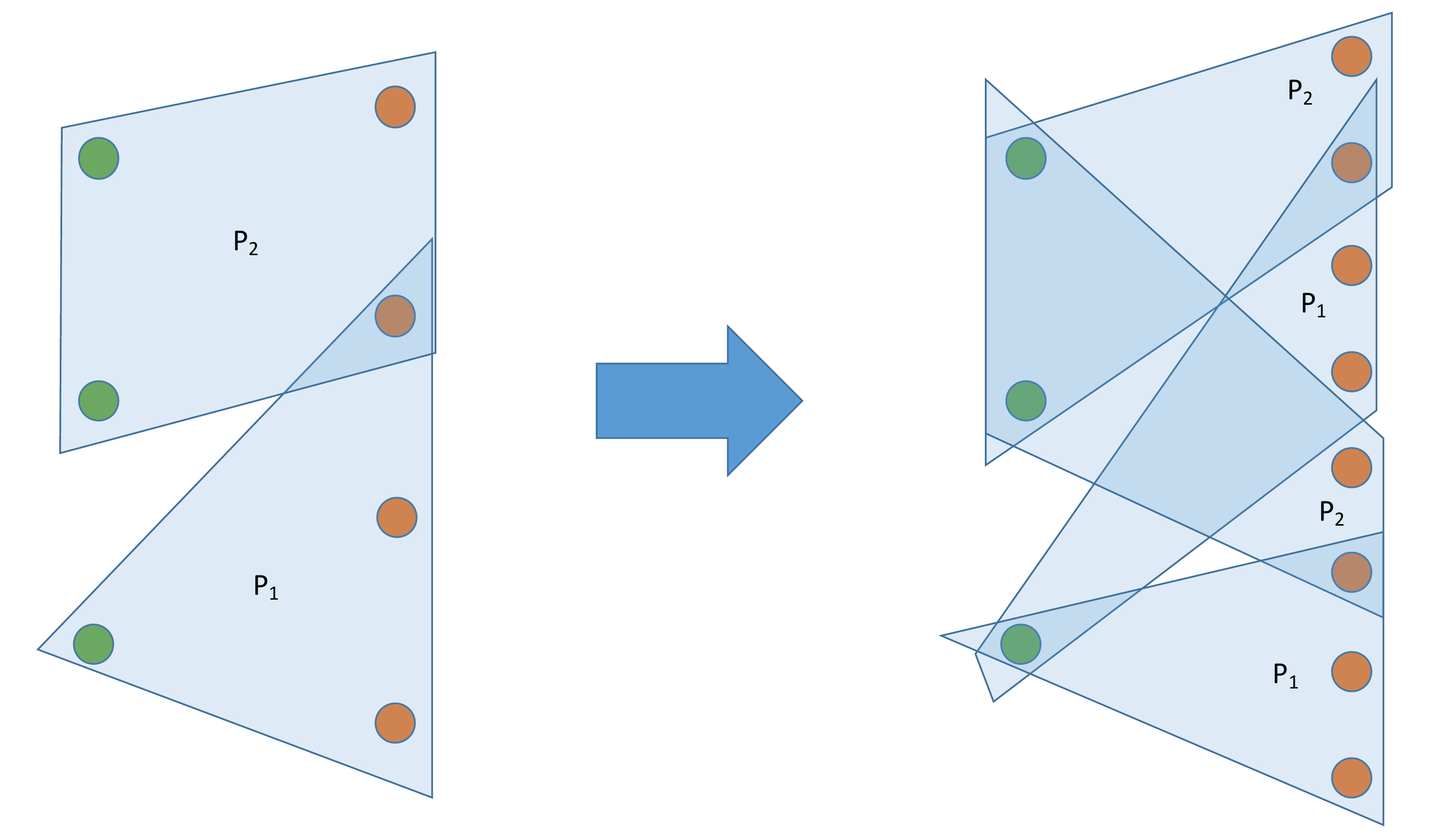}
  \caption{In a 3 qubit system $P_1$ acts on qubits 2,3 and $P_2$ acts on qubits 1,2. The HBM on the left represents the state $\frac{1}{\sqrt{Z}} P_2 P_1 |+\rangle$ in wavefunction, and the HBM on the right represents the state $\frac{1}{\sqrt{Z}} P_2 P_1 |+\rangle$ in distribution. We have squared the HBM by duplicating each hidden node, along with each hyperedge.}
\
\end{figure}

Note that if we square the HBM obtained from Lemma \ref{P_HBM}, it will remain $2k$-local with each node contained in at most 2 hyperedges, and we double the number of hidden nodes and hyperedges.

As discussed above, Lemmas \ref{stoq_P}, \ref{P_HBM}, \ref{squareHBM} prove Theorem \ref{stoq_HBM}; Lemmas \ref{SFF_P}, \ref{P_HBM}, \ref{squareHBM} prove Theorem \ref{SFF_HBM}; Theorem \ref{stoq_HBM} and Observation \ref{observation} prove Theorem \ref{stoq_Gibbs}; and Theorem \ref{SFF_HBM} and Observation \ref{observation} prove Theorem \ref{SFF_Gibbs}.

\bigskip

\section{All Gibbs states are SFF ground states}\label{Bravyi_Terhal_SFF}

In this section we prove Theorem \ref{Gibbs_SFF}. We follow the approach in Bravyi and Terhal \cite{bravyi2010complexity}, which is in turn based on the results of Verstraete et al.~\cite{verstraete2006criticality} and Somma et al.~\cite{somma2007quantum}. Let $H$ be a $k$-local classical Hamiltonian on $n$ qubits, with $T$ terms, where each qubit is acted on by at most $k'$ terms.

We can write the coherent Gibbs state $|\psi \rangle$ as
$$ |\psi \rangle = \frac{1}{\sqrt{Z}} e^{-H/2} |+ \rangle $$

Let $X_j$ be the Pauli $X$-matrix on qubit $j$. Using the representation above, one can check that:
$$ X_j |\psi \rangle = \Gamma_j |\psi \rangle \ , \ \Gamma_j = X_j e^{-H/2} X_j e^{H/2} $$
for each $j = 1, \dots, n$.

Note that the operator $\Gamma_j$ is diagonal in the computational basis. Since all matrix elements of $\Gamma_j$ are real, we conclude that $\Gamma_j$ is Hermitian. Note also that $\Gamma_j$ acts non-trivially only on $k'(k-1)+1$ qubits. Define the Hamiltonian
$$ H_{\text{SFF}} = \sum_j (\Gamma_j - X_j) $$

Note that $H_{\text{SFF}}$ is stoquastic. We have $H_{\text{SFF}} |\psi \rangle = 0$. The Perron-Frobenis theorem implies that $|\psi \rangle$ is the unique ground state of $H_{\text{SFF}}$. The same argument shows that $|\psi \rangle$ is the ground state of every local term $\Gamma_j - X_j$. Thus $H_{\text{SFF}}$ is a $(k'(k-1)+1)$-local SFF Hamiltonian with unique ground state $|\psi \rangle$.

\bigskip

\section{DBMs can represent HBMs}

We have seen from Theorems \ref{stoq_HBM} and \ref{SFF_HBM} that stoquastic and SFF ground states can be represented by HBMs. From Observation \ref{observation}, we also know that a classical thermal distribution is easily viewed as a HBM. Theorems \ref{stoq_DBM}, \ref{SFF_DBM} and \ref{Gibbs_RBM} are concerned with the representation of these distributions by a DBM. Thus to deduce Theorems \ref{stoq_DBM}, \ref{SFF_DBM} and \ref{Gibbs_RBM}, it is sufficient to show that any HBM can be represented by a DBM.

\begin{theorem}\label{HBM_DBM}
We are given a $k$-local HBM with $n$ visible nodes, $m$ hidden nodes, $T$ hyperedges, with each node contained in at most $k'$ hyperedges. We can find a DBM which represents the HBM in distribution to arbitrary precision*, with $n$ visible nodes, at most $2^k T$ hidden nodes in the middle layer, $m$ hidden nodes in the deep layer, and where each node has at most $k' 2^k$ connections.
\end{theorem}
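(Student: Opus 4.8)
The plan is to convert the HBM into a DBM hyperedge-by-hyperedge, using a small ``selector gadget'' of fresh middle-layer nodes to reproduce the exponential weight of each hyperedge out of only the linear and quadratic couplings a DBM permits. Concretely, I would keep the $n$ visible nodes as the visible layer, demote the $m$ original hidden nodes to the deep layer, and for each hyperedge $e$ (acting on up to $k$ nodes $y_e$, some visible and some deep) introduce $2^{|e|} \le 2^k$ auxiliary nodes $c_{e,s}$ in the middle layer, one for every configuration $s \in \{0,1\}^{|e|}$ of the bits of $e$. Since the middle layer is adjacent to both the visible and the deep layer, each $c_{e,s}$ may legitimately couple to the visible and deep nodes of $e$; crucially the gadget introduces \emph{no} direct visible--deep coupling, so the three-layer structure is preserved.

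For the gadget I would couple $c_{e,s}$ to $y_e$ through the affine ``match count'' $L_s(y_e) = \sum_{i \in e} (2 s_i y_i - s_i - y_i + 1)$, which equals $|e|$ exactly when $y_e = s$ and is at most $|e|-1$ otherwise. Setting the local energy $F_{e,s}(y_e, c_{e,s}) = c_{e,s}\,(\theta_s - M L_s(y_e))$, every $y_i c_{e,s}$ piece is a quadratic coupling and every constant$\,\cdot c_{e,s}$ piece is a bias on $c_{e,s}$, so the gadget lives inside the allowed DBM vocabulary. Summing out the single bit $c_{e,s}$ gives the factor $1 + e^{M L_s(y_e) - \theta_s}$; choosing $\theta_s = M|e| - \log C + F_e(s)$ makes this $1 + C e^{-F_e(s)}$ when $y_e = s$ and at most $1 + C e^{-F_e(s)} e^{-M} \approx 1$ when $y_e \ne s$. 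Because the $c$ variables decouple given $(x,h)$, the product of all gadgets attached to a single hyperedge collapses to $1 + C e^{-F_e(y_e)}$, and multiplying over hyperedges yields $\prod_e (1 + C e^{-F_e(y_e)})$.

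The main obstacle, and the only analytic content, is the leftover additive $``+1"$ in each factor, which I would eliminate through the arbitrary-precision mechanism of the Remark*: taking $C \to \infty$ (with $M$ correspondingly large so the $y_e \ne s$ terms stay suppressed) makes $1 + C e^{-F_e(y_e)}$ agree with $C e^{-F_e(y_e)}$ to any relative accuracy, so $\prod_e (1 + C e^{-F_e(y_e)}) = C^{|E|} e^{-F(x,h)}(1 + o(1))$. Summing over the deep nodes $h$ then gives $f_{\mathrm{DBM}}(x) = C^{|E|} f(x)(1 + o(1))$, and the global constant $C^{|E|}$ cancels in the normalized distribution; a routine estimate converts the uniform relative error into a total-variation bound that vanishes as $C, M \to \infty$, while the node count and connectivity remain fixed and only the energies depend on $\epsilon$, exactly as the arbitrary-precision convention requires.

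Finally I would tally resources to match the claim. The visible layer has $n$ nodes and the deep layer the $m$ original hidden nodes, while the middle layer contributes at most $2^k$ auxiliaries per hyperedge, hence at most $2^k T$ in total. Each original node lies in at most $k'$ hyperedges and therefore couples to at most $k' 2^k$ middle-layer auxiliaries; each auxiliary $c_{e,s}$ couples only to the $\le k \le k' 2^k$ nodes of its hyperedge. Thus every node has at most $k' 2^k$ connections, completing the construction.
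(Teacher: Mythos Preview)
Your proposal is correct and follows essentially the same route as the paper. Both constructions place the original hidden nodes in the deep layer and, for each hyperedge $e$, introduce up to $2^{|e|}$ middle-layer ``selector'' nodes---one per configuration $s$ of the bits of $e$---with $\pm(\text{large})$ couplings so that summing out the selector contributes a large factor precisely when $y_e=s$; the paper packages this gadget as the Le~Roux--Bengio lemma (Lemma~\ref{lerouxbengio}), normalizing each hyperedge to a distribution $\pi_e$ first so the per-edge RBM output approximates $\pi_e/\min\pi_e$, whereas you carry an overall scale $C$ and remove the additive $+1$ by sending $C\to\infty$ with $M$ growing accordingly. The resource counts and the arbitrary-precision mechanism coincide.
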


Our strategy to prove Theorem \ref{HBM_DBM} is to find a RBM representing each hyperedge individually, and combine these to create a DBM representing the complete HBM. We adapt the following Lemma of Le Roux and Bengio\cite{le2008representational}:

\begin{lemma}\label{lerouxbengio}
Consider a distribution $\pi$ over $\{0,1\}^k$ with $\min_x (\pi(x)) > 0$. This distribution can be represented to arbitrary precision* (in total variation) by a RBM with $2^k$ hidden nodes.
\end{lemma}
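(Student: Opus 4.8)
The plan is to exploit the bipartite structure of an RBM, which makes the sum over hidden variables factorize, and then to use one hidden node to place a sharp ``bump'' at each configuration of $\{0,1\}^k$. Labelling the visible variables $x \in \{0,1\}^k$, the hidden variables $h \in \{0,1\}^{2^k}$, and using the allowed energy terms $F_{\{x_i\}} = -a_i x_i$, $F_{\{h_j\}} = -b_j h_j$, $F_{\{x_i,h_j\}} = -W_{ij} x_i h_j$, the absence of hidden-hidden edges lets the sum over $h$ factorize:
\begin{equation*}
f(x) = \sum_h \exp(-F(x,h)) = \exp\Big(\sum_i a_i x_i\Big) \prod_{j} \Big(1 + \exp\big(b_j + \textstyle\sum_i W_{ij} x_i\big)\Big).
\end{equation*}
Hence $\log f(x) = \sum_i a_i x_i + \sum_j \log\big(1 + e^{b_j + \sum_i W_{ij} x_i}\big) + \mathrm{const}$, a sum of ``softplus'' contributions, one per hidden node.

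Next I would design, for each target string $z \in \{0,1\}^k$, a single hidden node that contributes to $\log f$ essentially only at $x = z$. Setting $W_{ij} = w(2z_i - 1)$ and $b_j = \theta_z - w|z|$ (with $|z| = \sum_i z_i$ and $w$ a large parameter), one checks that $b_j + \sum_i W_{ij} x_i = \theta_z$ when $x = z$, whereas flipping any bit of $z$ decreases this quantity by exactly $w$, so $b_j + \sum_i W_{ij} x_i \le \theta_z - w$ for all $x \ne z$. As $w \to \infty$ the softplus contribution of this node tends to $\log(1 + e^{\theta_z})$ at $x = z$ and to $0$ at every other configuration. Using one such node for each of the $2^k$ strings and setting the visible biases $a_i = 0$, in the limit we obtain $\log f(x) = \log(1 + e^{\theta_x}) + \mathrm{const}$.

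It remains to tune the parameters $\theta_z$ to hit the target. Choosing a constant $C > -\log\min_x \pi(x)$ --- finite precisely because of the full-support hypothesis $\min_x \pi(x) > 0$ --- I would set $e^{\theta_z} = e^{C}\pi(z) - 1 > 0$, so that $\log(1+e^{\theta_z}) = \log\pi(z) + C$ and the normalized output equals $\pi$ exactly in the $w\to\infty$ limit. For finite $w$ the off-target softplus terms are each bounded by $e^{\theta_z - w}$, so the total variation error is $O(2^k e^{-w})$ and vanishes as $w$ grows. Crucially the node count $2^k$ and the connectivity (each hidden node touches all $k$ visible nodes) are independent of the precision, and only the weight magnitude $w$ --- equivalently, the size of the Hamiltonian entries --- depends on $\epsilon$, as required by the ``arbitrary precision'' convention.

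The main obstacle is the localization step: because the energy is an affine function of $x$, a single hidden node can separate $z$ from the remaining configurations only through the linear-margin argument above, so I must verify that the margin grows uniformly (by $w$ per bit flip) in order that all $2^k-1$ off-target contributions vanish simultaneously in the limit. The only other point needing care is that the attainable bump heights $\log(1+e^{\theta_z}) \in (0,\infty)$ can indeed be matched to $\log\pi(z)$ up to a single global additive constant, which is exactly where the assumption $\min_x\pi(x)>0$ enters.
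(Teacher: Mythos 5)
Your proposal is correct and is essentially the paper's own (Le Roux--Bengio) construction: one hidden node per configuration $z$, weights of magnitude $w$ signed by $2z_i-1$ so the node's argument drops by $w$ per flipped bit, and a bias chosen so the on-target bump height encodes $\pi(z)$, with error $O(2^k e^{-w})$ controlled only through the weight magnitude. The one small difference is that you take the offset $C$ \emph{strictly} larger than $-\log\min_x\pi(x)$ so that every $\theta_z$ is finite and all $2^k$ nodes can be used, whereas the paper sets the offset equal to $-\log\lambda$ and must instead omit hidden nodes for strings whose probability is within a $(1+\epsilon'/2^k)$ factor of the minimum --- your variant is marginally cleaner and equally valid.
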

\begin{proof}
Let 
\begin{align*}
&\lambda = \min_x (\pi(x))\\
&R = \max_x (\pi(x)) / \min_x (\pi(x))
\end{align*} 
Say we want to represent $\pi$ to precision $\epsilon$. Note by normalisation of $\pi$ we have $\lambda \leq 1/2^k$. We will construct an RBM with output $f$ such that $f$ is close to $\frac{1}{\lambda} \pi$. Suppose we have $|| f - \frac{1}{\lambda} \pi ||_{\text{TV}} \leq \epsilon'$. Then by Lemma \ref{norms} we have:
$$ \left| \left| \frac{f}{\sum_y f(y)} - \pi \right| \right|_{\text{TV}} \leq 2 \lambda \epsilon' $$

Thus we require $\epsilon' \leq \frac{\epsilon}{2 \lambda}$, and it is sufficient to take $\epsilon' = 2^{k-1} \epsilon$

Now we construct the RBM. Order $\{0,1\}^k = \{ x_1, \dots, x_{2^k} \}$ such that $\lambda = \pi(x_1) \leq \dots \leq \pi(x_{2^k}) = \lambda R$. Let $j$ be the lowest $i$ such that $\pi(x_i)/\lambda \geq 1 + \frac{\epsilon'}{2^k}$. Begin with the empty RBM, $f_{\text{empty}} (x) = 1 \ \forall x$. Let $a$ be a large real number. Now for each $i = j, \dots, 2^k$, add a hidden node with weight vector $w_i = a ( x_i - \frac{1}{2})$ and bias $c_i = - w_i^T x_i + \log \left( \frac{1}{\lambda} \pi(x_i) - 1 \right)$. The output of the resulting RBM is
$$ f(x_l) = \prod_{i \geq j} \left( 1 + e^{w_i^T x_l + c_i} \right) \ \text{if} \ l<j $$
$$ f(x_l) = \frac{1}{\lambda} \pi(x_l) \prod_{l \neq i \geq j} \left( 1 + e^{w_i^T x_l + c_i} \right) \ \text{if} \ l \geq j $$

For $l \neq i$, $w_i^T x_l + c_i \leq \log \left( \frac{\pi(x_i)}{\lambda} - 1 \right) - \frac{1}{2} a$. Thus we have
$$ 1 \leq f(x_l) \leq \left( 1 + R e^{- \frac{1}{2} a} \right)^{2^k} = 1 + O(2^k R e^{-\frac{1}{2}a}) \ \text{if} \ l<j $$
\begin{align*}
\frac{1}{\lambda} \pi(x_l) \leq f(x_l) &\leq \frac{1}{\lambda} \pi(x_l) \left( 1 + R e^{- \frac{1}{2} a} \right)^{2^k}\\& =\frac{1}{\lambda} \pi(x_l) + O(2^k R^2 e^{-\frac{1}{2}a}) \ \text{if} \ l \geq j 
\end{align*}

We can ensure $|| f - \frac{1}{\lambda} \pi ||_{\text{TV}} \leq 2^{k-1} \epsilon$ by taking 
$e^{-\frac{1}{2} a} = O(2^{-k} R^{-2} \epsilon)$ ie $\frac{1}{2} a = \log(O(2^k R^2 \epsilon^{-1}))$.
\end{proof}

{\bf Remark*}: The magnitude of the weights and biases of the RBM are bounded by $\log(O(2^k R^2 \epsilon^{-1})) = \poly(n)$ in all applications of Lemma \ref{lerouxbengio} in this paper.

Given this we can prove Theorem \ref{HBM_DBM} as follows.

\begin{proof}
Let the output of the HBM be $f(x)$. Consider the hyperedge $F_e$ on the $k$ nodes $(x_e, h_e)$ (the $x_e$ or $h_e$ variables possibly empty). Define the distribution
$$ \pi_e (x_e, h_e) = \frac{e^{-F_e (x_e, h_e)}}{\sum_{x_e,h_e} e^{-F_e (x_e, h_e)}} $$

Note that $f(x) \propto \sum_h \prod_e \pi_e (x_e,h_e)$, so 
$$ \frac{f(x)}{\sum_x f(x)} = \frac{\sum_h \prod_e \pi_e (x_e,h_e)}{\sum_{x,h} \prod_e \pi_e (x_e,h_e)} $$

We will construct the desired DBM from the HBM. Copy the $n$ visible nodes into the visible layer of the DBM, and the $m$ hidden nodes into the deep layer of the DBM. For a given hyperedge $e$, use Lemma \ref{lerouxbengio} to find a RBM (say $\text{RBM}_e$) representing the distribution $\pi_e$ to precision $\epsilon$ (in 1 norm). $\text{RBM}_e$ will have at most $2^k$ hidden nodes. Place these hidden nodes in the middle layer, copying the connections and biases from $\text{RBM}_e$. We do this for each hyperedge $e$.

Let the DBM have output $\tilde{f}$, and $\text{RBM}_e$ have output $\tilde{f}_e$. By construction, $\tilde{f}_e(x_e,h_e) \propto \pi_e(x_e,h_e) + O(\epsilon)$.
$$ \tilde{f}(x) = \sum_h \prod_e \tilde{f}_e (x_e, h_e) = Z \sum_h \prod_e (\pi_e(x_e,h_e) + O(\epsilon)) $$
for some $Z$.

Let $\min_{e,x_e,h_e} \pi_e(x_e,h_e) = \lambda$.
\begin{align*}
\tilde{f}(x) &= Z \sum_h \prod_e (\pi_e (x_e,h_e) + O(\epsilon)) \\&= Z \sum_h \prod_e (1 + O(\lambda^{-1} \epsilon)) \pi_e (x_e,h_e)\\
&= Z \sum_h (1 + O(T \lambda^{-1} \epsilon)) \prod_e \pi_e (x_e,h_e) \\&= (1 + O(T \lambda^{-1} \epsilon)) Z \sum_h \prod_e \pi_e (x_e,h_e)\\
\end{align*}
\begin{align*}
\left| \left| \tilde{f}(x) - Z \sum_h \prod_e \pi_e (x_e,h_e) \right| \right|_1 \\= O(T \lambda^{-1} \epsilon) Z \sum_{x,h} \prod_e \pi_e (x_e,h_e)
\end{align*}

By Lemma \ref{norms},
$$ \left| \left| \frac{\tilde{f}}{\sum_y \tilde{f}(y)} - \frac{f}{\sum_y f(y)} \right| \right|_{\text{TV}} = O(T \lambda^{-1} \epsilon) $$

Thus if we want the overall error to be $O(\delta)$, we require $\epsilon = O(T^{-1} \lambda \delta)$. If we examine the construction, we can see that the resulting DBM has at most $2^k T$ hidden nodes in the middle layer and $m$ hidden nodes in the deep layer, and each node has at most $k' 2^k$ connections.
\end{proof}

\begin{figure}[h]
\centering
  \includegraphics[width=0.5\textwidth]{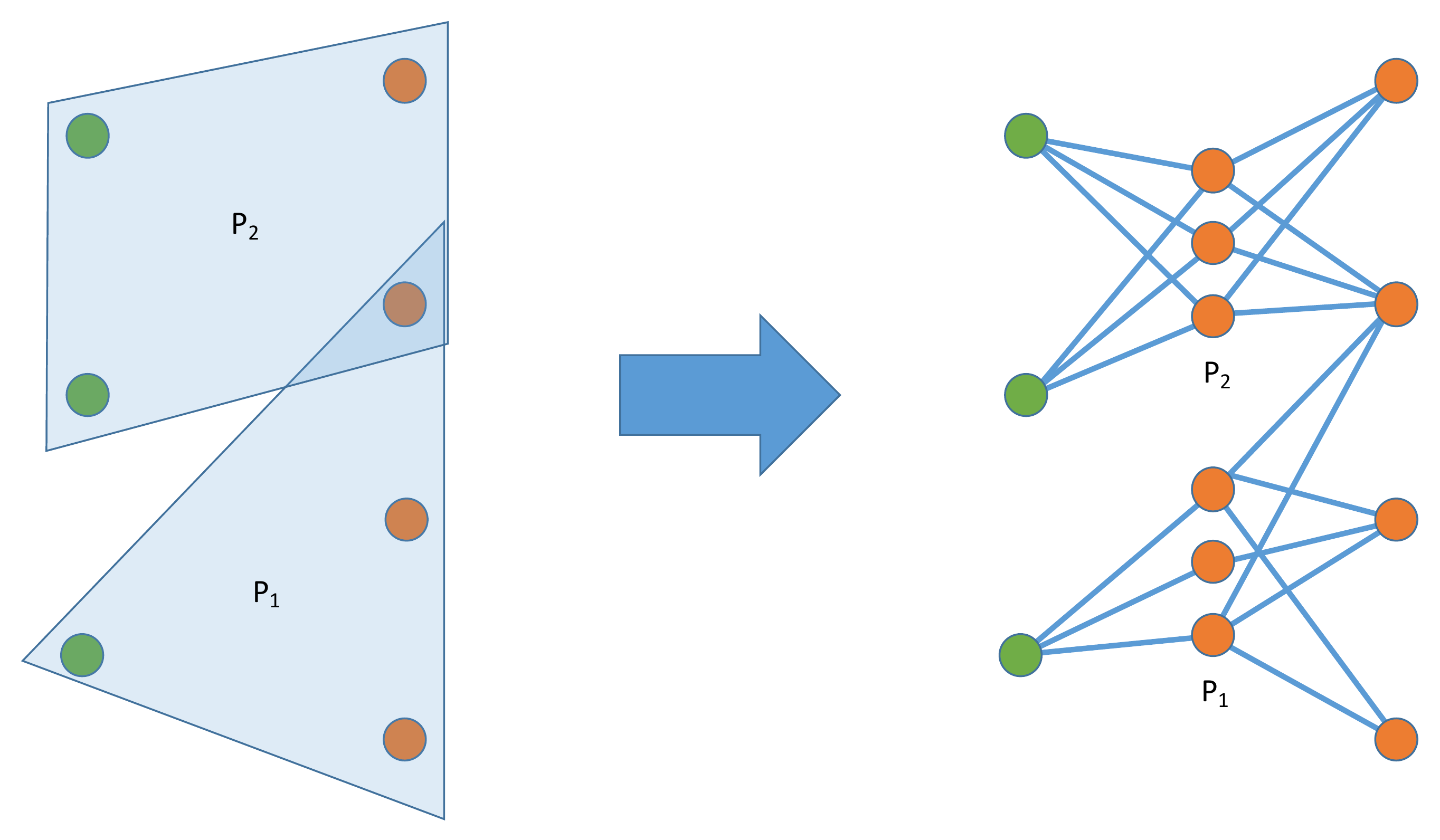}
  \caption{An example of converting two hyperedges into two local RBMs, leading to a global DBM.}
\end{figure}

As discussed above: Theorem \ref{HBM_DBM} and Theorem \ref{stoq_HBM} prove Theorem \ref{stoq_DBM}; Theorem \ref{HBM_DBM} and Theorem \ref{SFF_HBM} prove Theorem \ref{SFF_DBM}; and Theorem \ref{HBM_DBM} and Observation \ref{observation} prove Theorem \ref{Gibbs_RBM}. Another implication of this theorem is that any Boltzmann machine can be represented to arbitrary precision by a DBM with a polynomial overhead.

\bigskip

\section{Ising model}

The purpose of this section is to make explicit a relationship between the Boltzmann machine and the classical Ising model. First we define the Ising model: Consider a graph $(V,E)$, where each vertex $y$ is a spin $y \in \{-\frac{1}{2}, \frac{1}{2} \}$ with external field $a_y$, and each edge $\{y_1,y_2\} \in E$ carries a weight $W_{y_1,y_2}$. The (temperature 1) Ising model is the (temperature 1) Gibbs distribution of the classical Hamiltonian:
$$ H_{\text{Ising}} = - \sum_{y \in V} a_y y - \sum_{\{y_1,y_2\} \in E} W_{y_1,y_2} y_1 y_2 $$

Now consider a Boltzmann machine on the same graph and ignore the visible/hidden node distinction. The only difference between the Boltzmann machine and the Ising model is the values of the binary variables $y \in \{0,1\}$ for the Boltzmann machine versus the spin variables $y \in \{-\frac{1}{2}, \frac{1}{2}\}$ for the Ising model. We can change the biases/external fields respectively to account for this difference, so that the Hamiltonian of the Ising model and the energy of the Boltzmann machine exactly coincide. Thus their distributions will also coincide. To reintroduce the concept of hidden nodes, we must marginalize in the Ising model over spin variables corresponding to hidden nodes.

Thus we see that the Ising model and the Boltzmann machine are equivalent in the following sense: Any distribution represented by one can be represented by the other, as long as we allow ourselves to marginalize over a subset of spins in the Ising model.

\bigskip

\section{Classical sampling}\label{sampling}

\subsection{Gibbs sampling}

Representing a quantum state's distribution by a HBM provides a heuristic classical algorithm for sampling from the state, via Gibbs sampling. This is a special case of the Metropolis-Hastings algorithm. Say the energy of the HBM is $F(y) = \sum_e F_e(y)$, where $y = (x,h) \in \{0,1\}^N$, $N=n+m$, for $x \in \{0,1\}^n$ the visible nodes, and $h \in \{0,1\}^m$ the hidden nodes. Gibbs sampling sets up a Markov chain on the configuration space $y = \{0,1\}^N$, with each step requiring polynomial computation, whose stationary distribution is proportional to $\exp(-F(y))$. Running the Markov chain is then efficient, and the sample restricted to the visible nodes will converge to the HBM distribution, which is proportional to $\exp(-F(y))$ marginalised over the hidden nodes. However, it should be noted that in order to efficiently sample from the HBM distribution, we require the Gibbs sampling Markov chain to be fast mixing, which fails in some cases.

The Gibbs sampling procedure begins with a random configuration $y^{(0)} \in \{0,1\}^N$. At step $k$, we have $y^{(k-1)}$ and we wish to sample $y^{(k)}$. We sample each component $y_j^{(k)}$ of $y^{(k)} = (y_1^{(k)}, \dots y_N^{(k)})$ separately, starting with $j=1$. To sample $y_j^{(k)}$, we condition on the value of $(y_1^{(k)}, \dots, y_{j-1}^{(k)}, y_{j+1}^{(k-1)}, \dots y_N^{(k-1)})$. That is:
\begin{align*}
&\mathbb{P}(y_j^{(k)} | y_1^{(k)}, \dots, y_{j-1}^{(k)}, y_{j+1}^{(k-1)}, \dots y_N^{(k-1)}) \\
&= \frac{\mathbb{P}(y_1^{(k)}, \dots, y_{j-1}^{(k)}, y_j^{(k)}, y_{j+1}^{(k-1)}, \dots y_N^{(k-1)})}{\mathbb{P}(y_1^{(k)}, \dots, y_{j-1}^{(k)}, y_{j+1}^{(k-1)}, \dots y_N^{(k-1)})} \\
&= \frac{\exp(-F(y_1^{(k)}, \dots, y_{j-1}^{(k)}, y_j^{(k)}, y_{j+1}^{(k-1)}, \dots y_N^{(k-1)}))}{\sum_{u = 0,1} \exp(-F(y_1^{(k)}, \dots, y_{j-1}^{(k)}, u, y_{j+1}^{(k-1)}, \dots y_N^{(k-1)}))}
\end{align*}
$F$ is a sum of local terms, so these probabilities are efficiently computable. It can be checked that the detailed balance equations for this Markov chains are satisfied by $\exp(-F(y))$.

In the case of a DBM, we can streamline this process further. Say a 3 layer real DBM has visible layer $x$, middle hidden layer $h$, deep hidden layer $\tilde{h}$, and energy $F(x,h,\tilde{h}) = -a^T x - x^T W h - b^T h - h^T U \tilde{h} - c^T \tilde{h}$. The conditional distribution of a node conditional on the adjacent layer(s) takes a simple form:
\begin{align*}
\mathbb{P}(x_i = 1 | h) &= \sigma(a_i + W_{i \cdot} h) \\
\mathbb{P}(h_j = 1 | x, \tilde{h}) &= \sigma(x^T W_{\cdot j} + b_j + U_{j \cdot} \tilde{h}) \\
\mathbb{P}(\tilde{h}_k = 1 | h) &= \sigma(h^T U_{\cdot k} + c_k) \\
\end{align*}
where $\sigma(t) = 1/(1+\exp(-t))$. Note these probabilities are all efficiently computable. Now at step $k$ to sample $(x^{(k)}, h^{(k)}, \tilde{h}^{(k)})$, we can first sample $h^{(k)}$ conditional on $(x^{(k-1)}, \tilde{h}^{(k-1)})$, and then $(x^{(k)}, h^{(k)})$ conditional on $\tilde{h}^{(k)}$.

\subsection{Sampling using SFF Hamiltonian}

In \cite{bravyi2010complexity}, Bravyi and Terhal provide an algorithm for classical simulation of SFF ground states, based on a random walk on the basis states $\{0,1\}^n$. Consider a SFF Hamiltonian $H$ on $n$ qubits, and assume that the ground state $|\psi \rangle$ is unique, and has amplitudes $\langle x|\psi \rangle \geq 2^{- \poly(n)} \ \forall x$. (The situation is treated more generally in \cite{bravyi2010complexity}). Note that these assumptions are satisfied by the SFF Hamiltonian constructed in Section \ref{Bravyi_Terhal_SFF}. We can then set up a random walk on $\{0,1\}^n$ by specifying the probability of going from $x$ to $y$:
$$ \mathbb{P}(x \rightarrow y) = \frac{\langle y | \psi \rangle}{\langle x | \psi \rangle} \langle y|G|x \rangle $$
where $G = I - \beta H$, for some $\beta >0$ small enough so that $G$ has nonnegative entries. We start the walk at a random string in $\{0,1\}^n$. It can be checked that this is a well-defined Markov chain with stationary distribution $|\langle x|\psi \rangle|^2$ \cite{bravyi2010complexity}. It can also be shown that, with knowledge only of $H$ and not $|\psi \rangle$, these probabilities are efficiently computable and the Markov chain can be efficiently implemented \cite{bravyi2010complexity}. Moreover, it can be shown that the spectral gap of this Markov chain is equal to the spectral gap of $G$, which is $\beta \Delta$ where $\Delta$ is the spectral gap of $H$ \cite{bravyi2010complexity}. Thus if $H$ has an inverse polynomial gap, and has polynomial norm, then $\beta \Delta$ is inverse polynomial, so this Markov chain has a polynomial mixing time and the sampling algorithm becomes efficient.

Recall that in Section \ref{Bravyi_Terhal_SFF}, we took a local classical Hamiltonian $H_c$ and constructed a SFF Hamiltonian $H_{\text{SFF}}$ whose unique ground state is the coherent version of the Gibbs distribution of $H_c$. It is interesting to apply the above construction to $H_{\text{SFF}}$. The resulting random walk is as follows: for each $y$ which differs from $x$ on precisely one bit,
$$ \mathbb{P}(x \rightarrow y) = \beta \exp(- \frac{1}{2} (H_c(y) - H_c(x))) $$
With the remainder of the probability, remain at $x$. This walk can be efficiently implemented, and if $H_{\text{SFF}}$ has inverse polynomial gap, this walk has a polynomial mixing time, allowing efficient classical sampling of the Gibbs distribution of $H_c$.

\bigskip

\section{Discussion}
We showed the ground state space of stoquastic Hamiltonians can be efficiently described by classical thermal distributions which are represented by Hyper Boltzmann machines. This naturally leads to several interesting open questions.

In our work, we exhibit a partial equivalence between SFF ground states and classical thermal distributions (Theorems \ref{SFF_Gibbs} and \ref{Gibbs_SFF}). In order to map a SFF ground state to a classical thermal distribution, we require that the SFF Hamiltonian is gapped. Is it possible to remove this condition, and thus complete the equivalence between SFF ground states and classical thermal distributions? Similarly, for stoquastic Hamiltonians, in Theorem \ref{stoq_Gibbs} we require (a) the Hamiltonian to be gapped and (b) the ground state to have an inverse polynomial overlap with the uniform superposition state. Relaxing conditions (a) and (b) would complete the equivalence between stoquastic ground states and classical thermal distributions. We believe this would require a novel approach that differs from the Trotter decomposition of the imaginary time evolution operator which we use in this paper.

Another interesting direction is to investigate how the mappings in this paper relate to classical simulation of stoquastic ground states, and thus to the complexity theory of stoquastic Hamiltonians. In Section \ref{sampling}, we saw that if the ground state of a gapped SFF Hamiltonian has support on all the basis vectors, then it can be efficiently classically sampled. Moreover, it is often possible to efficiently classically sample from classical thermal distributions using techniques from Section \ref{sampling}. We pose this as another open problem: Are there conditions one can impose on the stoquastic Hamiltonian so that the related classical thermal distribution has an efficient sampling algorithm? If so, this mapping would provide a route to efficiently classically sampling from the stoquastic ground state.

{\it Acknowledgements}
Authors would like to thank Anurag Anshu for suggesting the correction for Lemma~\ref{SFF_P}.  
S.S. would like to thank Johannes Bausch and Joel Klassen for helpful discussions. 
S.S. acknowledges support from the QuantERA ERA-NET Cofund
in Quantum Technologies implemented within the European
Union’s Horizon 2020 Programme (QuantAlgo project), and
administered through EPSRC Grant No. EP/R043957/1, and
S.S. support from the Royal Society University Research Fellowship scheme.
\bigskip

%\begin{thebibliography}{9}
\bibliographystyle{chicago}
\bibliography{biblio}

\begin{thebibliography}{}

\bibitem[\protect\citeauthoryear{Aharonov, Arad, Landau, and Vazirani}{Aharonov
  et~al.}{2009}]{aharonov2009detectability}
Aharonov, D., I.~Arad, Z.~Landau, and U.~Vazirani (2009).
\newblock The detectability lemma and quantum gap amplification.
\newblock In {\em Proceedings of the forty-first annual ACM symposium on Theory
  of computing}, pp.\  417--426.

\bibitem[\protect\citeauthoryear{Alberici, Barra, Contucci, and
  Mingione}{Alberici et~al.}{2020}]{alberici2020annealing}
Alberici, D., A.~Barra, P.~Contucci, and E.~Mingione (2020).
\newblock Annealing and replica-symmetry in deep boltzmann machines.
\newblock {\em Journal of Statistical Physics\/}, 1--13.

\bibitem[\protect\citeauthoryear{Anshu, Arad, and Vidick}{Anshu
  et~al.}{2016}]{anshu2016simple}
Anshu, A., I.~Arad, and T.~Vidick (2016).
\newblock Simple proof of the detectability lemma and spectral gap
  amplification.
\newblock {\em Physical Review B\/}~{\em 93\/}(20), 205142.

\bibitem[\protect\citeauthoryear{Bravyi}{Bravyi}{2014}]{bravyi2014monte}
Bravyi, S. (2014).
\newblock Monte carlo simulation of stoquastic hamiltonians.
\newblock {\em arXiv preprint arXiv:1402.2295\/}.

\bibitem[\protect\citeauthoryear{Bravyi, Bessen, and Terhal}{Bravyi
  et~al.}{2006}]{bravyi2006merlin}
Bravyi, S., A.~J. Bessen, and B.~M. Terhal (2006).
\newblock Merlin-arthur games and stoquastic complexity.
\newblock {\em arXiv preprint quant-ph/0611021\/}.

\bibitem[\protect\citeauthoryear{Bravyi, Divincenzo, Oliveira, and
  Terhal}{Bravyi et~al.}{2008}]{bravyi2006complexity}
Bravyi, S., D.~P. Divincenzo, R.~I. Oliveira, and B.~M. Terhal (2008).
\newblock The complexity of stoquastic local hamiltonian problems.
\newblock {\em Quant. Inf. Comp. Vol.8, No.5, pp. 0361-0385\/}.

\bibitem[\protect\citeauthoryear{Bravyi and Hastings}{Bravyi and
  Hastings}{2017}]{bravyi2017complexity}
Bravyi, S. and M.~Hastings (2017).
\newblock On complexity of the quantum ising model.
\newblock {\em Communications in Mathematical Physics\/}~{\em 349\/}(1), 1--45.

\bibitem[\protect\citeauthoryear{Bravyi and Terhal}{Bravyi and
  Terhal}{2010}]{bravyi2010complexity}
Bravyi, S. and B.~Terhal (2010).
\newblock Complexity of stoquastic frustration-free hamiltonians.
\newblock {\em Siam journal on computing\/}~{\em 39\/}(4), 1462--1485.

\bibitem[\protect\citeauthoryear{Carleo and Troyer}{Carleo and
  Troyer}{2017}]{carleo2017solving}
Carleo, G. and M.~Troyer (2017).
\newblock Solving the quantum many-body problem with artificial neural
  networks.
\newblock {\em Science\/}~{\em 355\/}(6325), 602--606.

\bibitem[\protect\citeauthoryear{Cubitt, Montanaro, and Piddock}{Cubitt
  et~al.}{2018}]{cubitt2018universal}
Cubitt, T.~S., A.~Montanaro, and S.~Piddock (2018).
\newblock Universal quantum hamiltonians.
\newblock {\em Proceedings of the National Academy of Sciences\/}~{\em
  115\/}(38), 9497--9502.

\bibitem[\protect\citeauthoryear{Farhi, Goldstone, Gutmann, and Sipser}{Farhi
  et~al.}{2000}]{farhi2000quantum}
Farhi, E., J.~Goldstone, S.~Gutmann, and M.~Sipser (2000).
\newblock Quantum computation by adiabatic evolution.
\newblock {\em arXiv preprint quant-ph/0001106\/}.

\bibitem[\protect\citeauthoryear{Gao and Duan}{Gao and
  Duan}{2017}]{gao2017efficient}
Gao, X. and L.-M. Duan (2017).
\newblock Efficient representation of quantum many-body states with deep neural
  networks.
\newblock {\em Nature communications\/}~{\em 8\/}(1), 1--6.

\bibitem[\protect\citeauthoryear{Hinton}{Hinton}{2002}]{hinton2002training}
Hinton, G.~E. (2002).
\newblock Training products of experts by minimizing contrastive divergence.
\newblock {\em Neural computation\/}~{\em 14\/}(8), 1771--1800.

\bibitem[\protect\citeauthoryear{Kjaergaard, Schwartz, Braum{\"u}ller, Krantz,
  Wang, Gustavsson, and Oliver}{Kjaergaard
  et~al.}{2020}]{kjaergaard2020superconducting}
Kjaergaard, M., M.~E. Schwartz, J.~Braum{\"u}ller, P.~Krantz, J.~I.-J. Wang,
  S.~Gustavsson, and W.~D. Oliver (2020).
\newblock Superconducting qubits: Current state of play.
\newblock {\em Annual Review of Condensed Matter Physics\/}~{\em 11}, 369--395.

\bibitem[\protect\citeauthoryear{Klassen and Terhal}{Klassen and
  Terhal}{2019}]{klassen2019two}
Klassen, J. and B.~M. Terhal (2019).
\newblock Two-local qubit hamiltonians: when are they stoquastic?
\newblock {\em Quantum\/}~{\em 3}, 139.

\bibitem[\protect\citeauthoryear{Le~Roux and Bengio}{Le~Roux and
  Bengio}{2008}]{le2008representational}
Le~Roux, N. and Y.~Bengio (2008).
\newblock Representational power of restricted boltzmann machines and deep
  belief networks.
\newblock {\em Neural computation\/}~{\em 20\/}(6), 1631--1649.

\bibitem[\protect\citeauthoryear{Li, Albash, and Lidar}{Li
  et~al.}{2019}]{li2019improved}
Li, R.~Y., T.~Albash, and D.~A. Lidar (2019).
\newblock Improved boltzmann machines with error corrected quantum annealing.
\newblock {\em arXiv preprint arXiv:1910.01283\/}.

\bibitem[\protect\citeauthoryear{Marvian, Lidar, and Hen}{Marvian
  et~al.}{2019}]{marvian2019computational}
Marvian, M., D.~A. Lidar, and I.~Hen (2019).
\newblock On the computational complexity of curing non-stoquastic
  hamiltonians.
\newblock {\em Nature communications\/}~{\em 10\/}(1), 1--9.

\bibitem[\protect\citeauthoryear{Melko, Carleo, Carrasquilla, and Cirac}{Melko
  et~al.}{2019}]{melko2019restricted}
Melko, R.~G., G.~Carleo, J.~Carrasquilla, and J.~I. Cirac (2019).
\newblock Restricted boltzmann machines in quantum physics.
\newblock {\em Nature Physics\/}~{\em 15\/}(9), 887--892.

\bibitem[\protect\citeauthoryear{Salakhutdinov and Hinton}{Salakhutdinov and
  Hinton}{2009}]{salakhutdinov2009deep}
Salakhutdinov, R. and G.~Hinton (2009).
\newblock Deep boltzmann machines.
\newblock In {\em Artificial intelligence and statistics}, pp.\  448--455.

\bibitem[\protect\citeauthoryear{Salakhutdinov and Murray}{Salakhutdinov and
  Murray}{2008}]{salakhutdinov2008quantitative}
Salakhutdinov, R. and I.~Murray (2008).
\newblock On the quantitative analysis of deep belief networks.
\newblock In {\em Proceedings of the 25th international conference on Machine
  learning}, pp.\  872--879.

\bibitem[\protect\citeauthoryear{Shi, Sun, and Zeng}{Shi
  et~al.}{2019}]{shi2019neural}
Shi, H.-Q., X.-Y. Sun, and D.-F. Zeng (2019).
\newblock Neural-network quantum state of transverse-field ising model.
\newblock {\em Communications in Theoretical Physics\/}~{\em 71\/}(11), 1379.

\bibitem[\protect\citeauthoryear{Somma, Batista, and Ortiz}{Somma
  et~al.}{2007}]{somma2007quantum}
Somma, R.~D., C.~D. Batista, and G.~Ortiz (2007).
\newblock Quantum approach to classical statistical mechanics.
\newblock {\em Physical review letters\/}~{\em 99\/}(3), 030603.

\bibitem[\protect\citeauthoryear{Torlai and Melko}{Torlai and
  Melko}{2017}]{torlai2017neural}
Torlai, G. and R.~G. Melko (2017).
\newblock Neural decoder for topological codes.
\newblock {\em Physical review letters\/}~{\em 119\/}(3), 030501.

\bibitem[\protect\citeauthoryear{Verstraete, Wolf, Perez-Garcia, and
  Cirac}{Verstraete et~al.}{2006}]{verstraete2006criticality}
Verstraete, F., M.~M. Wolf, D.~Perez-Garcia, and J.~I. Cirac (2006).
\newblock Criticality, the area law, and the computational power of projected
  entangled pair states.
\newblock {\em Physical review letters\/}~{\em 96\/}(22), 220601.

\bibitem[\protect\citeauthoryear{Vieijra, Casert, Nys, De~Neve, Haegeman,
  Ryckebusch, and Verstraete}{Vieijra et~al.}{2020}]{vieijra2020restricted}
Vieijra, T., C.~Casert, J.~Nys, W.~De~Neve, J.~Haegeman, J.~Ryckebusch, and
  F.~Verstraete (2020).
\newblock Restricted boltzmann machines for quantum states with non-abelian or
  anyonic symmetries.
\newblock {\em Physical Review Letters\/}~{\em 124\/}(9), 097201.

\bibitem[\protect\citeauthoryear{Walls and Milburn}{Walls and
  Milburn}{2007}]{walls2007quantum}
Walls, D.~F. and G.~J. Milburn (2007).
\newblock {\em Quantum optics}.
\newblock Springer Science \& Business Media.

\end{thebibliography}

\bigskip

\section{Appendix}

\subsection{Technical lemma}

\begin{lemma}\label{norms}
Let $V$ be a normed space, $v,w \in V$, and $\hat{v} = v/||v||$, $\hat{w} = w/||w||$. If $||v - w|| \leq \epsilon$ then $|| \hat{v} - \hat{w} || \leq 2 \epsilon/||v||$.
\end{lemma}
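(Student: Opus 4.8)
The plan is to bound the difference of the two normalised vectors by interpolating through a carefully chosen intermediate term and then invoking the reverse triangle inequality. The only genuine decision is \emph{which} intermediate term to insert, since this controls whether the final denominator is $\|v\|$ or $\|w\|$; to land on $\|v\|$ as the statement requires, I would insert $w/\|v\|$ rather than $v/\|w\|$.

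Concretely, I would first split the difference as
\[
\hat{v} - \hat{w} = \frac{v}{\|v\|} - \frac{w}{\|w\|} = \frac{1}{\|v\|}(v - w) + w \left( \frac{1}{\|v\|} - \frac{1}{\|w\|} \right).
\]
Applying the triangle inequality and pulling the scalar factors out of the norm then gives
\[
\|\hat{v} - \hat{w}\| \leq \frac{\|v - w\|}{\|v\|} + \frac{\bigl| \,\|w\| - \|v\| \,\bigr|}{\|v\|}.
\]
The first term is at most $\epsilon/\|v\|$ directly from the hypothesis $\|v-w\|\leq\epsilon$. For the second term, the reverse triangle inequality yields $\bigl|\,\|w\| - \|v\|\,\bigr| \leq \|w - v\| \leq \epsilon$, so it too is bounded by $\epsilon/\|v\|$. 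Adding the two contributions produces $\|\hat{v} - \hat{w}\| \leq 2\epsilon/\|v\|$, which is exactly the claim.

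I do not expect any real obstacle here beyond the bookkeeping: the argument is elementary and the estimates are tight up to the constant $2$. The one point worth flagging is the asymmetry of the construction — had I instead inserted $v/\|w\|$, the identical chain of inequalities would deliver the bound $2\epsilon/\|w\|$, so the stated form is a consequence of the particular intermediate term chosen above. Implicitly one assumes $v,w \neq 0$ so that $\hat v$ and $\hat w$ are well defined, which is always the case in the applications where this lemma is used.
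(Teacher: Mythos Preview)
Your proof is correct and follows essentially the same approach as the paper: both insert the intermediate term $w/\|v\|$, apply the triangle inequality, and bound $\bigl|\|v\|-\|w\|\bigr|$ via the reverse triangle inequality to obtain the two $\epsilon/\|v\|$ contributions.
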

\begin{proof}
$$ \left| ||v|| - ||w|| \right| \leq || v - w || \leq \epsilon $$
$$ || \hat{v} - \hat{w} || \leq \left| \left| \frac{v}{||v||} - \frac{w}{||v||} \right| \right| + \frac{1}{||v||} \left| ||w|| - ||v|| \right| \leq \frac{2\epsilon}{||v||} $$
\end{proof}

\subsection{Trotter decomposition}\label{Trotter}

Let $H = H_1 + \dots + H_L$, where each $H_i$ is bounded in by $|| H_i ||_{\text{op}} \leq J$. We want to show that:
\begin{align*}
\left| \left| e^{- \frac{1}{2} \delta H_L} \dots e^{- \frac{1}{2} \delta H_1} e^{- \frac{1}{2} \delta H_1} \dots e^{- \frac{1}{2} \delta H_L} - e^{-\delta H} \right| \right|_{\text{op}} = O(\delta^3 L^3 J^3)
\end{align*}

Let's expand both expressions.

\begin{align*}
&e^{- \delta H} =\\& 1 - \delta \sum_i H_i + \frac{\delta^2}{2} \sum_i H_i^2 + \frac{\delta^2}{2} \sum_{i < j} (H_i H_j + H_j H_i) + O(\delta^3 L^3 J^3) \\
&e^{- \frac{1}{2} \delta H_L} \dots e^{- \frac{1}{2} \delta H_1} e^{- \frac{1}{2} \delta H_1} \dots e^{- \frac{1}{2} \delta H_L} \\
&= \left(1 - \frac{\delta}{2} H_L + \frac{\delta^2}{8} H_L^2 + O(\delta^3 J^3) \right) \dots \\&\dots \left(1 - \frac{\delta}{2} H_1 + \frac{\delta^2}{8} H_1^2 + O(\delta^3 J^3) \right) \left(1 - \frac{\delta}{2} H_1 + \frac{\delta^2}{8} H_1^2 + O(\delta^3 J^3) \right) \dots \\&\dots \left(1 - \frac{\delta}{2} H_L + \frac{\delta^2}{8} H_L^2 + O(\delta^3 J^3) \right) \\
&= 1 - \delta \sum_i H_i + \frac{\delta^2}{2} \sum_i H_i^2 + \frac{\delta^2}{2} \sum_{i < j} (H_i H_j + H_j H_i) + O(\delta^3 L^3 J^3)
\end{align*}

Thus the two expressions differ by an error of $O(\delta^3 L^3 J^3)$.

\end{document}